\newcommand{\appref}[1]{\hyperref[#1]{{Appendix~\ref*{#1}}}}
\newcommand{\be}{\begin{eqnarray} \begin{aligned}}
\newcommand{\ee}{\end{aligned} \end{eqnarray} }
\newcommand{\benn}{\begin{eqnarray*} \begin{aligned}}
\newcommand{\eenn}{\end{aligned} \end{eqnarray*}}
\newcommand*{\cE}{\mathcal{E}}
\newcommand*{\cF}{\mathcal{F}}
\newcommand*{\cM}{\mathcal{M}}
\newcommand*{\cN}{\mathcal{N}}
\newcommand*{\cD}{\mathcal{D}}
\newcommand*{\cP}{\mathcal{P}}
\newcommand{\bc}{\begin{center}}
\newcommand{\ec}{\end{center}}
\newcommand{\dg}{\dagger}
\newcommand{\Tr}{\mathop{\mathrm{tr}}\nolimits}
\newtheorem{theorem}{Theorem}[section]
\newtheorem{lemma}[theorem]{Lemma}
\newtheorem{definition}[theorem]{Definition}
\newtheorem{proposition}[theorem]{Proposition}
\newcommand{\ket}[1]{|#1\rangle}
\def\>{\rangle}
\def\<{\langle}
\newcommand{\proj}[1]{|#1\rangle\!\langle#1|}
\begin{document}

\title{Uniform Additivity in Classical and Quantum Information}

 \author{Andrew Cross}
 \affiliation{IBM TJ Watson Research Center, Yorktown Heights, NY 10598, USA}
 \author{Ke Li}
 \affiliation{IBM TJ Watson Research Center, Yorktown Heights, NY 10598, USA}
\affiliation{Center for Theoretical Physics, Massachusetts Institute of Technology, Cambridge, MA 02139, USA}
 \author{Graeme Smith}
 \affiliation{IBM TJ Watson Research Center, Yorktown Heights, NY 10598, USA}

\date{\today}

\begin{abstract}
Information theory establishes the fundamental limits on data transmission, storage, and processing \cite{CoverThomas}.   Quantum information theory unites information theoretic ideas with an accurate quantum-mechanical description of reality to give a more accurate and complete theory with new and more powerful possibilities for information processing.  The goal of both classical and quantum information theory is to quantify the optimal rates of interconversion of different resources.  These rates are usually characterized in terms of entropies.   However, nonadditivity of many entropic formulas often makes finding answers to information theoretic questions intractable  \cite{DSS98,SS07, SY08, Hastings09,LWZG09,SS09a,CCH11,cubitt15}.  In a few auspicious cases, such as the classical capacity 
of a classical channel, the capacity region of a multiple access channel and the entanglement
assisted capacity of a quantum channel, additivity allows a full characterization of optimal rates.  Here we present a new mathematical property of entropic formulas, uniform additivity, that is both easily evaluated and rich enough to capture all known quantum additive formulas.  We give a complete characterization of uniformly additive functions using the linear programming approach to entropy inequalities.  In addition to all known quantum formulas, we find a new and intriguing additive quantity: the completely coherent information.  We also uncover a remarkable coincidence---the classical and quantum uniformly additive functions are identical;  the tractable answers in classical and quantum information theory are formally equivalent.  Our techniques pave the way for a deeper understanding of the tractability of information theory, from classical multi-user problems like broadcast channels to the evaluation of quantum channel capacities.
\end{abstract}

\maketitle

Entropies tell us how much information is stored in a system. As a result, the answers to information theoretic questions are usually found in terms of entropies evaluated on systems arising in optimal protocols.  For example, the communication capacity of a classical channel ${\cal N}$ that maps random variable $X$ to $Y$ is given by the maximization $C({\cal N}) = \max_X I(X;Y)$, where the mutual information $I(X;Y) = H(X) + H(Y) - H(XY)$ is a linear combination of entropies \footnote{The entropy 
of a random variable $X$ is $H(X) = -\sum_x p_x \log p_x$}. Similarly, the cost of transmitting a quantum state $\rho_A$ on system $A$ is its von Neumann entropy $H(A) = -\Tr \rho_A \log \rho_A$.  A noisy quantum communication channel ${\cal N}: A \rightarrow B$ 
can be mathematically extended to a unitary interaction $U: A \rightarrow BE$ of the input with an independent and inaccessible environment.  Such a channel can be applied to a state $\phi_{VA}$ to create a state $\rho_{VBE}$.  More generally, $V$ may have many subsystems, and we may use $\phi_{V_1...V_n A}$ to create $\rho_{V_1...V_nBE}$.  We can use such a state to generate an \emph{entropic formula}:
$f_{\alpha}(U_{\cN}) = \max_{\phi_{V_1...V_n A}} f_{\alpha}(U_{\cN},\phi_{V_1...V_n A})$
with
$f_{\alpha}(U_{\cN},\phi_{V_1...V_n A}) = \sum_{s \in \cP(V_1...V_n BE)} \alpha_s H(\rho_{s}),$
where $\cP(V_1...V_n BE)$ ranges over all collections of subsystems from $V_1...V_nBE$, and $H(\rho_{s})$ is the entropy of collection $s$.  We call the $V_1...V_n$  systems auxiliary variables.  Most operationally relevant quantities in quantum information can be expressed as a regularization of such a formula:
\begin{align}\label{Eq:RegularizedFormula}
f^{\infty}_{\alpha}(\cN) = \lim_{n \rightarrow\infty}\frac{1}{n}f_\alpha\left( \cN^{\otimes n}\right),
\end{align}
where $\cN^{\otimes n}$ is the $n$-fold parallel use of channel $\cN$.  The auxiliary variables in an entropic formula are usually related operationally to the structure of optimal protocols; for example, the optimal distribution $X$ that maximizes $C({\cal N}) = \max_X I(X;Y)$ 
to give the classical capacity defines a distribution of capacity-achieving error correcting codes.

\begin{figure}[htbp]
\includegraphics[width=3in]{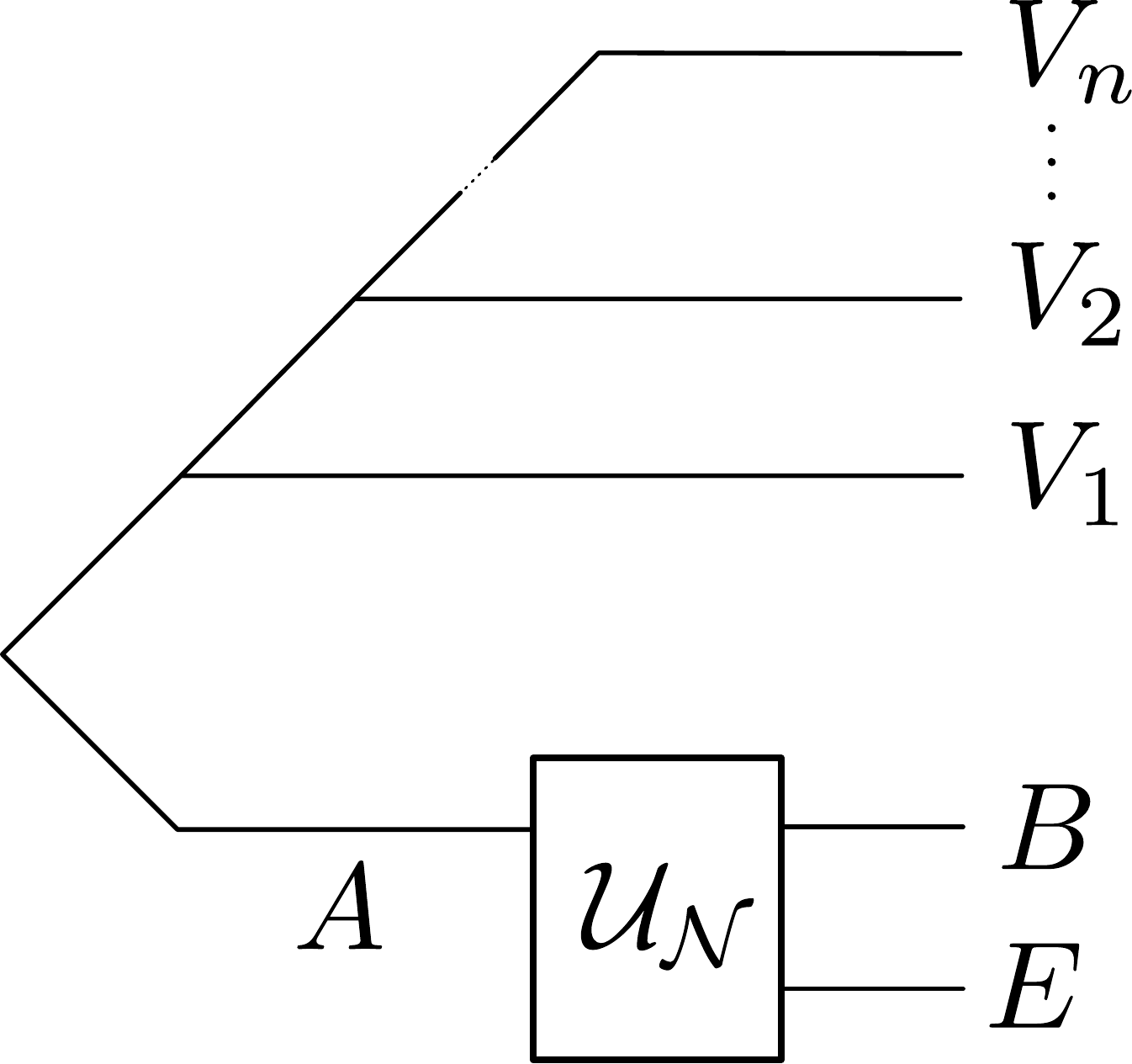}

\caption{Using a quantum channel to generate a quantum state.  A noisy quantum channel from input $A$ to output $B$ can always be thought of as a unitary interaction of the input with some inaccessible environment $E$.  We can generate a quantum
state from this interaction by creating $\phi_{V_1...V_n A}$ and acting on $A$ with $U_{\cal N}$, leading to the state $\rho_{V_1...V_n BE} = I \otimes U_{\cN}\phi_{V_1...V_n A} I \otimes U_{\cN}^\dagger$.}
\label{Fig:QuantumChannel}
\end{figure}

The infinite-dimensional optimization of Eq.(\ref{Eq:RegularizedFormula}), which is called a multi-letter formula,  is usually intractable.  In some rare cases additivity allows a substantial simplification.
An entropic formula $f_\alpha(U_\cN)$ is \emph{additive} if  $f_\alpha(U_\cN \otimes U_\cM) = f_\alpha(U_\cN) +  f_\alpha (U_\cM)$ for all channels $\cN$ and $\cM$.  
When $f_\alpha$ is additive, we have $f^\infty_{\alpha}(U_\cN) = f_{\alpha}(U_\cN)$, which is called a single-letter formula.  There are single-letter formulas for the classical capacity of a classical channel \cite{Shannon48}, the entanglement-assisted capacity of a quantum channel \cite{BSST02}, and the quantum capacity of a quantum channel with access to a special zero-capacity assistance channel\cite{SSW06}.  A single-letter formula often leads to a tractable means of evaluating a quantity.   

Many relevant entropic formulas are nonadditive, especially in the quantum setting\cite{DSS98,SS07, Hastings09,SS09a,CCH11}.  Optimal performance is thus captured only by a \emph{multi-letter} formula, which is 
intractable to evaluate.  As a result, many fundamental questions in quantum information theory remain open---the classical and quantum capacities of most channels are unknown, and even deciding if a quantum channel has nonzero quantum capacity seems insurmountable \cite{cubitt15}.

Entropy inequalities express relationships between entropies of different collections of subsystems that are satisfied for all states.  Subadditivity of entropy, for example, tells us that $H(A) + H( B) - H(AB) \geq 0$, or equivalently $I(A;B) \geq 0$.  
Its generalization, strong subadditivity\cite{LiebRuskai}, tells us that conditional mutual information is also positive: $I(A;B|C) = H(AC) + H(BC) - H(ABC)-H(C) \geq 0$.  The set of $(2^n -1)$-dimensional entropy vectors 
$\mathbf{v} = (H(X_1), ... H(X_n), ..., H(X_1...X_n))$ that can be realized by classical probability distributions on $X_1...X_n$ form a cone,  whose study in terms of linear programming was formalized in \cite{Yeung97}.  The larger cone of realizable quantum entropies was studied in \cite{Pip03}.  Entropy inequalitites are the key to proving additivity when it exists.

If $f_\alpha$ is an additive formula with one auxiliary variable \footnote{We focus on 1 auxiliary variable for simplicity. Multiple variables can be handled similarly.}, for any pair of channels $\cN,\cM$ and any state $\phi_{VA_1A_2}$, there must 
be a pair of states $\tilde{\phi}_{\tilde{V}A_1}$ and $\hat{\phi}_{\hat{V}A_2}$ such that 
\begin{align}\label{Eq:Subadditive}
f_\alpha\left(U_{\cN}\otimes U_{\cM}, \phi_{VA_1A_2}\right) \leq f_\alpha(U_{\cN}, \tilde{\phi}_{\tilde{V}A_1}) + f_\alpha(U_{\cM}, \hat{\phi}_{\hat{V}A_2}).
\end{align}
We call such a mapping $\phi_{VA_1A_2} \rightarrow (\tilde{\phi}_{\tilde{V}A_1}, \hat{\phi}_{\hat{V}A_2})$ a decoupling.   In principle, the appropriate decoupling may depend in an arbitrary way on the channels $\cN, \cM$ and the state $\phi_{VA_1A_2}$.  
In practice, useful decouplings are invariably what we call \emph{standard} decouplings, which have a very simple form and are described in Fig.~\ref{Fig:Decoupling}. 
Once we have fixed a decoupling and $f_\alpha$, we can use entropy inequalities to 
determine if Eq. (\ref{Eq:Subadditive})  is satisfied.   When $f_\alpha$ does satisfy Eq.~(\ref{Eq:Subadditive})  with $(\tilde{\phi}, \hat{\phi})$ defined by a standard decoupling $D$, we say $f_\alpha$ is uniformly subadditive with respect to $D$.  Since we also have 
$f_\alpha(U_\cN\otimes U_{\cM}, \tilde{\phi}\otimes \hat{\phi}) = f_\alpha(U_\cN, \tilde{\phi}) + f_\alpha(U_\cM, \hat{\phi}),$
  subadditivity implies that
\begin{align}
f_{\alpha}(U_\cN\otimes U_{\cM}) = f_{\alpha}(U_\cN) + f_{\alpha}(U_\cM)
\end{align}
 and we call $f_\alpha$ uniformly additive with respect to $D$.   All known proofs of quantum additivity proceed by choosing a standard decoupling and proving Eq. (\ref{Eq:Subadditive}) via entropy inequalities \cite{BSST02, DJKR06, SSW06}.

We have found all entropic formulas $f_\alpha$ that are uniformly additive with respect to standard decouplings.  We do this by enumerating all standard decouplings, and using the linear programming formulation of 
entropy inequalities to determine which $f_\alpha$ are uniformly subadditive for each decoupling.  Our approach captures all previously known examples of additive formulas and more.  This method opens a line of attack on a variety of questions,  from classical multiuser information theory to finding new classes of
channels with additive capacities, and clarifies when and where to expect quantum synergies like superactivation \cite{SY08}.

\begin{figure}[htbp]
\includegraphics[width=4in]{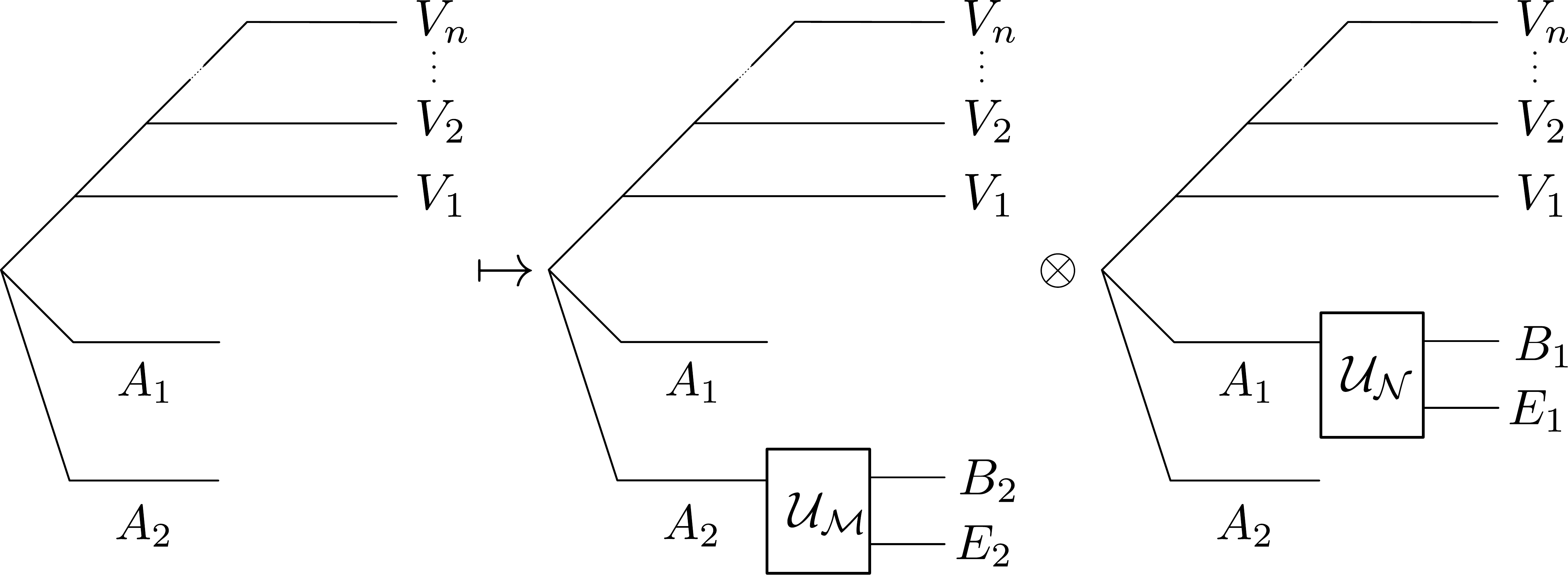}
\caption{Decoupling is the process of mapping one state that can be acted on by two channels into two separate states, each of which can be acted on by a single channel use.  It maps a state $\phi_{V_1...V_nA_1A_2}$ to two states, $\tilde{\phi}_{\tilde{V}_1...\tilde{V}_nA_1} $ and $\hat{\phi}_{\hat{V}_1...\hat{V}_nA_2} $.  Here $A_1$ and $A_2$ are the input spaces to $\cN$ and $\cM$, so that $U_{\cN}\otimes U_{\cM}$ can be applied to $\phi_{V_1...V_nA_1A_2}$ to make $\rho_{V_1...V_n B_1E_1B_2E_2}$, while  $U_{\cN}$ acts on $\tilde{\phi}_{\tilde{V}_1...\tilde{V}_n A_1}$ to make $\tilde{\rho}_{\tilde{V}_1...\tilde{V}_n B_1 E_1}$ and $U_{\cM}$ acts on $\hat{\phi}_{\hat{V}_1...\hat{V}_n A_2}$ to make $\hat{\rho}_{\hat{V}_1...\hat{V}_n B_2 E_2}$. For a \emph{standard} decoupling, the states  $\tilde{\phi}_{\tilde{V}_1...\tilde{V}_nA_1}$ and $\hat{\phi}_{\hat{V}_1...\hat{V}_nA_2}$ are constructed from $\phi_{V_1...V_nA_1A_2}$ as follows.  To obtain $\tilde{\phi}_{\tilde{V}_1...\tilde{V}_nA_1}$, we first apply $U_{\cM}$ to make
$\phi_{V_1...V_n A_1 B_2 E_2}$.  Given $\phi_{V_1...V_n A_1 B_2 E_2}$, we define $\tilde{V}_i$ to contain $V_i$. $B_2$ and $E_2$ are each either assigned to one of the $\tilde{V}_i$ (or perhaps traced out) to generate 
$\tilde{\phi}_{\tilde{V}_1...\tilde{V}_n A_1}$.    We define $\hat{\phi}_{\hat{V}_1...\hat{V}_n A_2}$ similarly.
  }
\label{Fig:Decoupling}
\end{figure}

\begin{figure}[htbp]
\includegraphics[width=3in]{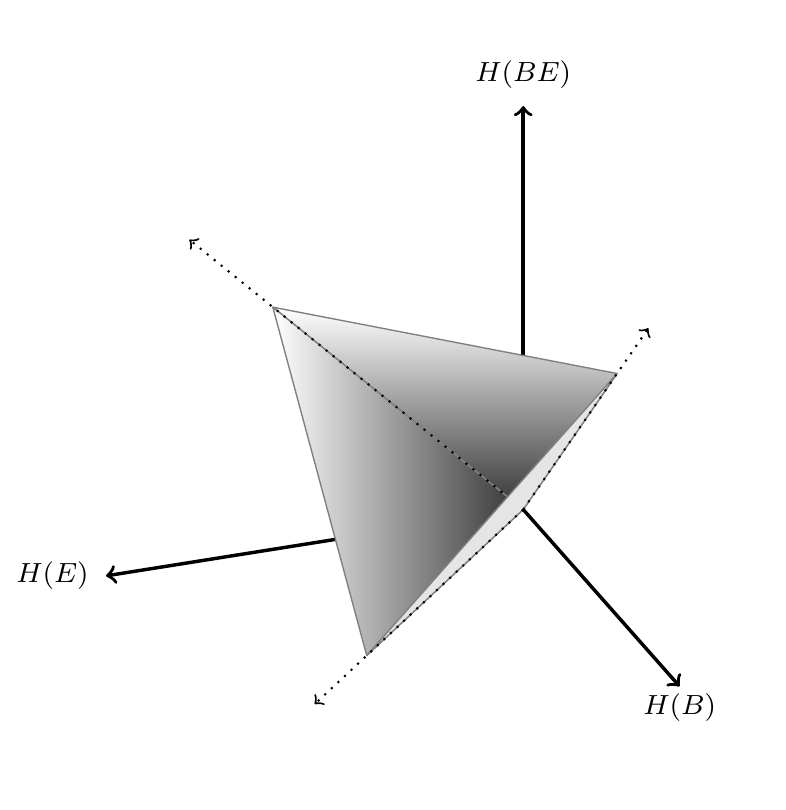}
\caption{Quantum Entropy Cone for two systems.  The entropies of a bipartite quantum state $\rho_{BE}$ form a vector $(H(B),H(E),H(BE))$.  The vectors of entropies that can be realized by a quantum state lie in a cone.  For two systems, the faces of this
cone are implied by strong subadditivity.  This is also true for $n=3$ systems, but for $n\geq 4$ we do not know whether the quantum entropy cone lies strictly inside the cone implied by strong subadditivity.   }
\label{Fig:EntropyCone}
\end{figure}

\begin{figure}[htbp]
\includegraphics[width=3in]{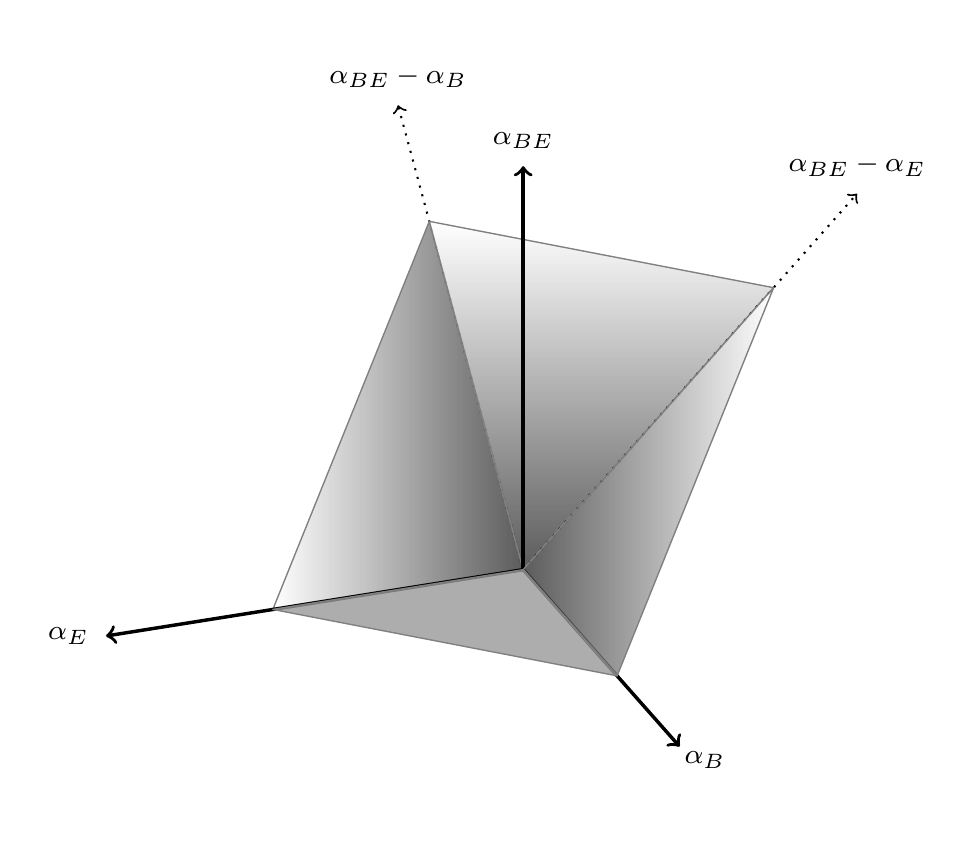}
\caption{Additivity cone.  
 Fixing a decoupling gives an entropy inequality that implies additivity.  We check whether this inequality is satisfied by using known additivity inequalites, as expressed
by the quantum entropy cone described in Figure \ref{Fig:EntropyCone}.  We find a cone of coefficients defining the entropy formulas that are uniformly additive with respect to the fixed decoupling.  The cone above is the additive cone for zero-auxiliary variable formulas.}
\label{Fig:AdditivityCone}
\end{figure}

Formulas with no auxiliary variables are particularly simple: $f_{\alpha}(U_\cN, \phi_A) = \alpha_BH(B) + \alpha_EH(E) + \alpha_{BE}H(BE)$.  Here we have only one standard decoupling to consider: $\phi_{A_1A_2} \rightarrow (\phi_{A_1},\phi_{A_2})$.  The conditions for uniform additivity in this case are

\begin{align}\label{Eq:ZeroVar}
\alpha_B + \alpha_{BE} &\geq 0\\
\alpha_E + \alpha_{BE} &\geq 0\nonumber\\
\alpha_B + \alpha_E + \alpha_{BE} &\geq 0\nonumber\\
\alpha_{BE} &\geq 0.\nonumber
\end{align}
These inequalities define a cone of $\alpha$s, which we refer to as a uniform additivity cone.  Eq.~(\ref{Eq:ZeroVar}) describes this cone in terms of its facets, but a cone can equally well be described in terms of extremal rays:  letting 
\begin{align}
\alpha_0 & = (1,0,0) \equiv H(B) \label{Eq:ExtRays}\\
\alpha_1 & = (0,1,0)\nonumber \equiv H(E)\\
\alpha_2 & = (0,-1,1)\nonumber\equiv H(B|E)\\
\alpha_3 & = (-1,0,1)\nonumber \equiv H(E|B),
\end{align}
$\alpha$ satisfies Eq.~(\ref{Eq:ZeroVar}) exactly when $\alpha = \sum_i \lambda_i \alpha_i$ with $\lambda_i \geq 0$.

Formulas with one auxiliary variable require us to consider multiple decouplings, capturing the choice of $\tilde{V}$ and $\hat{V}$ in the decoupling map $D: \phi_{VA_1A_2} \rightarrow (\tilde{\phi}_{\tilde{V}A_1}, \hat{\phi}_{\hat{V}A_2})$. 
A standard decoupling always has $\tilde{V} = \tilde{M}_2V$ with $\tilde{M}_2$ chosen from $\{\varnothing, B_2,E_2,B_2E_2\}$ and $\hat{V} = \hat{M}_1V$ with $\hat{M}_1$ chosen from $\{\varnothing, B_1,E_1,B_1E_1\}$.  We can parametrize these
by $(a,b)$, with $a$ and $b$ running from $0$ to $3$.  We take advantage of two simplifications that can be made without loss of generality.   First, given $f_\alpha$, $\alpha = (\alpha^\varnothing, \alpha^V)$ with $\alpha^\varnothing = (\alpha_B,\alpha_E,\alpha_{BE})$ and 
$\alpha^V = (\alpha_V, \alpha_{BV},\alpha_{EV},\alpha_{BEV})$, we can define $f^\varnothing_{\alpha^\varnothing}$ and $f^V_{\alpha^V}$ such that $f_\alpha$ is uniformly additive with respect to decoupling $(a,b)$ if and only if
$f^\varnothing_{\alpha^\varnothing}$ is uniformly additive with respect to the decoupling $\phi_{A_1A_2}\rightarrow (\phi_{A_1},\phi_{A_2})$ and $f^V_{\alpha^V}$ is uniformly additive 
with respect to $(a,b)$.  Second, these formulas have two useful symmetries that reduce 
the number of decouplings we must consider: 1) for any additive formula, we get a similar additive formula by exchanging $B$ and $E$ and 2)  $f^V_{\alpha^V}$  with $\alpha^V = (\alpha_V, \alpha_{BV}, \alpha_{EV}, \alpha_{BEV})$ is equivalent via purification 
of the quantum state to $f^V_{\tilde{\alpha}^V}$ with $\tilde{\alpha}^V = (\alpha_{BEV}, \alpha_{EV},\alpha_{BV},\alpha_V)$.  This leaves only $5$ inequivalent decouplings to be considered.

Figure \ref{Fig:OneVarTable} describes the functions $f^V_{\alpha^V}$ that are uniformly additive with respect to the $5$ inequivalent decouplings.  They are positive linear combinations\footnote{i.e., linear combinations with positive coefficients} of
the extreme rays in the corresponding row of the table.  The uniformly additive functions with respect to decoupling $(a,b)$ are the sum of any $f^\varnothing_{\alpha^\varnothing}$ satisfying Eq. (\ref{Eq:ZeroVar}) and such an $f^V_{\alpha^V}$ found from Figure \ref{Fig:OneVarTable}.

\begin{figure}[htbp]

\begin{tabular}{|l|c|c|c|c|c| l |}
\hline
case & (a,b) & $\hat{M}_1$ & $\tilde{M}_2$ & equivalents & Additive Cone & Extreme Rays\\
\hline
1. & (3,3) & $B_1E_1$ & $B_2E_2$ & (0,0) & $\begin{array}{c}
\alpha_V + \alpha_{BV}+\alpha_{EV}  \geq 0\\
\alpha_V + \alpha_{BV}  \geq 0\\
\alpha_V + \alpha_{EV} \geq 0\\
\alpha_V \geq 0\end{array}$  & 
$\begin{array}{c}
 -H(E|BV) \\
-H(E|V) \\
 -H(B|EV) \\
-H(B|V)
\end{array}$\\
\hline
2. & (3,2) & $B_1E_1$ & $E_2$ &  $\begin{array}{c}(2,3), (3,1)\\ (1,3), (1,0),(0,1)\\ (2,0),(0,2) \end{array}$ &  $\begin{array}{c} \alpha_{BV} \leq 0\\ \alpha_V + \alpha_{BV}\geq 0\end{array}$  & $\begin{array}{c}  -H(BE|V) \\ \pm H(B|EV) \\ -H(B|V)\end{array}$\\
\hline 
3. & (3,0) & $B_1E_1$ & $\varnothing$ & (0,3) & $\begin{array}{c} \alpha_{EV} \leq 0\\ \alpha_{BV} \leq 0\end{array}$ & $\begin{array}{c}  H(E|BV) \\  -H(E|V) \\  \pm H(BE|V) \end{array}$\\
\hline
4. & (1,1) &  $B_1$ & $B_2$ & (2,2)  & $\begin{array}{c} \alpha_{EV} = 0\\ \alpha_V \geq 0\\ \alpha_{BEV} \geq 0\end{array}$  & $\begin{array}{c} -H(B|V) \\  H(E|BV)\end{array}$\\
\hline
5. & (1,2) & $B_1$ & $E_2$ & (2,1)& $\begin{array}{c} \alpha_{BEV} \geq 0\\ \alpha_V \geq 0\end{array}$ & $\begin{array}{c}\pm [H(EV)-H(BV)] \\ H(E|BV) \\  -H(E|V) \end{array}$\\
\hline
\end{tabular}

\caption{Functions $f^V_{\alpha^V}$ that are uniformly subadditive with respect to the $5$ inequivalent standard decouplings.  Fixing a decoupling $D$, a single-auxiliary variable $f_{\alpha}$ is uniformly subadditive with respect to $D$ exactly when it can 
be written as a sum of $f^\varnothing_{\alpha^\varnothing}$ satisfying Eq.\ref{Eq:ZeroVar}  and $f^V_{\alpha^V}$ that is a positive linear combination of the extreme rays in the row corresponding to $D$. Multiple auxiliary variables are all found similarly.\label{Fig:OneVarTable}}
\end{figure}

We find many familar additive quantities in this way.  For example, maximum output entropy ($\max_{\phi_A}H(B)$) satisfies Eq. (\ref{Eq:ZeroVar}).  The quantity $-H(B|V)$ was shown to be additive in \cite{DJKR06}, and later refered to as reverse
 coherent information\cite{GPLS09}.  Since $H(B)$ satisfies Eq. (\ref{Eq:ZeroVar}) and $-H(B|V)$ is uniformly additive with respect to multiple decouplings, so is 
their sum $H(B)-H(B|V) = I(B;V)$, whose maximization gives the entanglement assisted capacity.

One extreme ray of the $(1,2)$  decoupling's additive cone is particularly intriguing: $I^{cc}(\cN) = \max_{\phi_{VA}}[H(VB) - H(VE)]$.  We call this quantity the completely coherent
information, since its relationship to the coherent information $I^{\rm coh}(\cN) = \max_{A} [H(B)-H(E)]$ is similar to the relationship between completely positive and positive maps.  The version of this quantity evaluated on states was 
shown in \cite{OW05} to be a lower bound on the communication cost of exchanging the $B$ and $E$ systems, but it was not realized that it is additive.  We also show that $I^{cc}$ is also an \emph{upper}
bound for the jointly achievable quantum communication rate from $A$ to either $B$ or $E$.  We have not found a clear operational interpretation of this quantity. 
 

We now consider formulas with multiple auxiliary variables.  For concreteness, suppose we have some formula depending on two auxiliary variables $V_1$ and $V_2$.  A standard decoupling is a mapping from a state $\phi_{V_1V_2A_1A_2}$ to two states
$\tilde{\phi}_{\tilde{V}_1\tilde{V}_2 A_1}$ and $\hat{\phi}_{\hat{V}_1\hat{V}_2 A_2}$ that we get by choosing to incorporate (or not) $B_2$ and $E_2$ into one of $\tilde{V}_1$ and  $\tilde{V}_2$ (and similarly for $B_1$, $E_1$ in $\hat{V}_1$ and  $\hat{V}_2$).  Since $\tilde{V}_1$ and  $\tilde{V}_2$ should be non-overlapping, it is necessary to impose some consistency on the decouplings $(a_1,b_1)$ and $(a_2,b_2)$.  These also give rise to a third decoupling, which we call $(a^\star,b^\star)$, that tells us which systems get 
included in the joint systems $\tilde{V}_1\tilde{V}_2$ and $\hat{V}_1\hat{V}_2$.

 In this case it  is possible to separate the variables much as we did in the single-variable case.  Indeed, any $f_\alpha$ with $\alpha = (\alpha^\varnothing, \alpha^{V_1}, \alpha^{V_2}, \alpha^{V_1V_2})$ \footnote{Here $\alpha^\varnothing = (\alpha_B,\alpha_E\alpha_{BE})$, $\alpha^{V_1} = (\alpha_{V_1}, \alpha_{BV_1},\alpha_{EV_1}\alpha_{BEV_1})$, $\alpha^{V_2} = (\alpha_{V_2}, \alpha_{BV_2},\alpha_{EV_2}\alpha_{BEV_2})$ and $\alpha^{V_1V_2} = (\alpha_{V_1V_2}, \alpha_{BV_1V_2},\alpha_{EV_1V_2}\alpha_{BEV_1V_2})$.} is uniformly additive with respect to decoupling $\{(a_1,b_1),(a_2,b_2)\}$ exactly when $f^\varnothing_{\alpha^\varnothing}$, $f^{V_1}_{\alpha^{V_1}}$, $f^{V_2}_{\alpha^{V_2}}$, and $f^{V_1V_2}_{\alpha^{V_1V_2}}$ are uniformly additive with respect to their respective decouplings.  The same is true for 
more auxiliary variables.  For any number of auxiliary variables, all $f_\alpha$ uniformly additive with respect to standard decouplings can be constructed from Figure \ref{Fig:OneVarTable} and Eq.~(\ref{Eq:ZeroVar}).



Surprisingly, carrying out the same analysis as above for classical states and channels yields \emph{exactly} the same set of uniformly additive functions.  This is 
in spite of the fact that the classical and quantum entropy cones do not coincide.  This coincidence of uniformly additive functions may explain a well-known phenomenon: 
Formulas that solve classical information theory problems often tend to have corresponding quantum formulas that solve an appropriately coherified version of the problem
\footnote{Examples of this include 1) the correspondence between classical capacity of a classical channel and the entanglement assisted capacity of a quantum channel, 2) the connection 
between Slepian-Wolf and state merging, and 3) the correspondence between Csiszar-Korner solution to the broadcast channel with confidential messages and the recent
analysis of the quantum one-time pad.}.  In these cases, the classical and quantum problems have a solution for the same reason: the existence of an appropriately additive
formula whose additivity proofs are formally equivalent.  It would be very nice to formalize this apparent correspondence.

We are currently exploring the application of our techniques to finding special classes of channels that have additive capacities.  We have identified a
new criterion for the additivity of coherent information: informational degradability.  We say a channel is informationally degradable if for any input state $\phi_{VA}$ we have $I(V;B) \geq I(V;E)$.   This
class includes degradable channels.  We suspect informational degradability is the \emph{only } single-letter entropic constraint on a channel that implies this additivity.  
We have also found a set of entropic constraints that imply a state is of the c-q form, which should be useful for studying classical and private capacities of quantum channels.

We have identified the limits of the techniques used in all known instances of quantum additivity.  There \emph{are} some classical formulas that are additive but not uniformly additive (e.g., minimum output entropy of a classical channel).
 Proving additivity in these cases requires knowledge of the optimizing state (in the case of minimum output entropy of a quantum channel, the optimal state
is a pure state, which for classical channels is also a product state.).  One potential path to new quantum additive formulas beyond what we have found is to better understand the optimizing state in an entropic formula.  At this 
point we know of no examples where this can be done, but they may well exist.

\section{Methods}
We now argue that Eq~(\ref{Eq:ZeroVar}) captures all uniformly additive formulas with no auxiliary variables.  To begin, for a zero auxilliary variable $f_\alpha$, we define
 \begin{align}
\Delta^\varnothing(\alpha,U_\cN\otimes U_\cM,\phi_{A_1A_2}) & = f_{\alpha}(U_{\cN}, \phi_{A_1}) + f_{\alpha}(U_{\cM}, \phi_{A_2}) - f_{\alpha}(U_{\cN}\otimes U_{\cM}, \phi_{A_1A_2})\\
& = \alpha_B I(B_1;B_2) + \alpha_E I(E_1;E_2) + \alpha_{BE} I(B_1E_1;B_2E_2),
\end{align}
so that $f_\alpha$ is uniformly additive with respect to the standard decoupling exactly when $\forall \cN,\cM,\phi_{A_1A_2}$ we have $\Delta^\varnothing(\alpha,U_\cN\otimes U_\cM,\phi_{A_1A_2}) \geq 0$.
We make use of the alternate characterization of Eq.(\ref{Eq:ZeroVar}) in terms of extremal rays, Eq.~(\ref{Eq:ExtRays}).  It is easy to verify that the $\alpha$s associated with each of the extremal rays $H(B)$, $H(E)$, $H(E|B)$, and $H(B|E)$ lead to positive $\Delta^\varnothing$s.  For
example, $H(B)$ corresponds to $(\alpha_B, \alpha_E, \alpha_{BE}) = (1,0,0)$ and $\Delta^\varnothing(\alpha,U_\cN\otimes U_\cM,\rho_{A_1A_2}) = I(B_1;B_2) \geq 0$, while $H(B|E)$ corresponds to $(\alpha_B, \alpha_E, \alpha_{BE}) = (0,-1,1)$ and gives
$\Delta^\varnothing(\alpha,U_\cN\otimes U_\cM,\rho_{A_1A_2}) = I(B_1E_1;B_2E_2) - I(E_1;E_2)$, which is also positive for all $\rho_{A_1A_2}$.  $H(E)$ and $H(E|B)$ follow mutatis mutandis.  Eq. (\ref{Eq:ZeroVar}) is thus a sufficient condition for uniform additivity.  To see that it is also a 
necessary condition, we find states (in fact, classical distributions) $p^0$, $p^1$, $p^2$, $p^3$ and channels $\cN$, $\cM$ such that   
\begin{align}
\Delta^\varnothing(\alpha,U_\cN\otimes U_\cM,p^0) &= \alpha_B + \alpha_{BE}\nonumber\\
\Delta^\varnothing(\alpha,U_\cN\otimes U_\cM,p^1) &= \alpha_E + \alpha_{BE}\nonumber\\
\Delta^\varnothing(\alpha,U_\cN\otimes U_\cM,p^2) &= \alpha_B+ \alpha_E + \alpha_{BE}\nonumber\\
\Delta^\varnothing(\alpha,U_\cN\otimes U_\cM,p^3) &=  \alpha_{BE}\nonumber.
\end{align}
This shows that for any $\alpha$ that doesn't satisfy Eq.~(\ref{Eq:ZeroVar}) there are states and channels such that $\Delta^\varnothing(\alpha,U_\cN\otimes U_\cM,p) < 0$.  Thus, Eq.~(\ref{Eq:ZeroVar}) are both necessary and sufficient for uniform additivity.

 Uniform additivity with one auxiliary variable requires us to consider 5 inequivalent decouplings.  Fixing a decoupling $(a,b)$ that maps $\phi_{VA_1A_2} \rightarrow (\tilde{\phi}_{\tilde{V}A_1}, \hat{\phi}_{\hat{V}A_2})$
define
\begin{align}
\Delta^{(a,b)}(\alpha, U_\cN \otimes U_\cM, \phi_{VA_1A_2}) & = f_\alpha(U_\cN, \tilde{\phi}_{\tilde{V}A_1}) + f_\alpha(U_\cM, \hat{\phi}_{\hat{V}A_2}) - f_\alpha(U_\cN\otimes U_\cM, {\phi}_{VA_1A_2})
\end{align}
so that $f_\alpha$ is uniformly additive with respect to $(a,b)$ exactly when for all $U_\cN$, $U_\cM$, $\phi_{VA_1A_2}$ we have $\Delta_{(a,b)}(\alpha, U_\cN \otimes U_\cM, \phi_{VA_1A_2}) \geq 0$.  Finding the uniformly subadditive $f_\alpha$ is greatly simplified 
through the separation of variables: letting $\alpha  = (\alpha^\varnothing, \alpha^V)$ with $\alpha^\varnothing = (\alpha_B, \alpha_E, \alpha_{BE})$ and $\alpha^V = (\alpha_V, \alpha_{BV}, \alpha_{EV}, \alpha_{BEV})$ and defining
\begin{align}
\Delta^\varnothing(\alpha^\varnothing, U_\cN \otimes U_\cM, \phi_{A_1A_2}) & = f_{\alpha^\varnothing}(U_{\cN}, \phi_{A_1}) + f_{\alpha^\varnothing}(U_{\cM}, \phi_{A_2}) - f_{\alpha^\varnothing}(U_{\cN}\otimes U_{\cM}, \phi_{A_1A_2}) \\
\Delta^{V,(a,b)}(\alpha^V, U_\cN \otimes U_\cM, \phi_{VA_1A_2}) & = f_{(0,\alpha^V)}(U_\cN, \tilde{\phi}_{\tilde{V}A_1}) + f_{(0,\alpha^V)}(U_\cM, \hat{\phi}_{\hat{V}A_2}) - f_{(0,\alpha^V)}(U_\cN\otimes U_\cM, {\phi}_{VA_1A_2})
\end{align}
we have 
\begin{align}
\Delta^{(a,b)}(\alpha, U_\cN \otimes U_\cM, \phi_{VA_1A_2})  & = \Delta^\varnothing(\alpha^\varnothing, U_\cN \otimes U_\cM, \phi_{A_1A_2}) + \Delta^{V,(a,b)}(\alpha^V, U_\cN \otimes U_\cM, \phi_{VA_1A_2}).
\end{align}
Furthermore, $\Delta^{(a,b)}(\alpha, U_\cN \otimes U_\cM, \phi_{VA_1A_2}) \geq 0$ for all $U_\cN$,  $ U_\cM$, $\phi_{VA_1A_2}$ exactly when $\Delta^\varnothing(\alpha^\varnothing, U_\cN \otimes U_\cM, \phi_{A_1A_2}) \geq 0$ for all $U_\cN$,  $ U_\cM$, $\phi_{VA_1A_2}$ and $\Delta^{V,(a,b)}(\alpha^V, U_\cN \otimes U_\cM, \phi_{VA_1A_2})\geq 0$ for all  $U_\cN$,  $U_\cM$, and  $\phi_{VA_1A_2}$.  We have already characterized 
when $\Delta^\varnothing(\alpha,U_\cN \otimes U_\cM, \phi_{VA_1A_2}) \geq 0$ in the previous paragraph, and we can determine the $\alpha^V$ such that $\Delta^{V,(a,b)}(\alpha^V, U_\cN \otimes U_\cM, \phi_{VA_1A_2}) \geq 0$ for all $U_\cN$,  $U_\cM$, and  $\phi_{VA_1A_2}$ in a similar way (either by direct computation or linear programming).

\section{Acknowledgements}
KL acknowledges  NSF grants CCF-1110941 and CCF-1111382 and GS acknowledges NSF grant CCF-1110941.


\bibliographystyle{apsrev}
\bibliography{shortq}

\clearpage
\appendix

\section{Notation and Background}
\label{sec:Notation and Background}
For any collection of systems $X_1\dots X_n$, let $\cP(X_1\dots X_n)$ be the power set of this collection
\begin{align}
\cP(X_1\dots X_n) &= \left\{ X_1^{u_1}\dots X_n^{u_n}| (u_1,\dots,u_n) \in \{0,1\}^n  \right\}.
\end{align}
We study channels $U_{\cN}: A \rightarrow BE$ and are interested in formulas $f_\alpha(U_{\cN})$ that are maximizations of linear combinations of entropies involving auxiliary variables $V_1\dots V_n$
\begin{align}
f_\alpha(U_{\cN}) = \max_{\phi_{V_1\dots V_n A}} f_\alpha(U_{\cN}, \phi_{V_1\dots V_n A}),
\end{align}
where the linear entropic quantity $f_\alpha(U_{\cN},\phi_{V_1\dots V_n A})$ is given by
\begin{align}
f_\alpha(U_{\cN}, \phi_{V_1\dots V_n A})  = \sum_{s \in \cP(V_1\dots V_n BE)} \alpha_s H\left(s,\rho_{V_1\dots V_n BE}\right)
\end{align}
where $\rho_{V_1\dots V_n BE} =( I \otimes U_{\cN}) \phi_{V_1\dots V_n A} (I \otimes U_{\cN}^\dagger)$ is the channel output state and $H\left(s,\rho_{V_1\dots V_n BE}\right)$ is the entropy of the reduced state corresponding to systems $s$.

\section{General Considerations}
\label{sec:General Considerations}
We are interested in understanding when
\begin{align}
f_\alpha(U_\cN\otimes U_\cM) & = f_\alpha(U_\cN) + f_\alpha(U_\cM).
\end{align}
In order to do this, we study mappings from a state $\phi_{V_1...V_n A_1A_2}$ that can be acted on by $U_\cN\otimes U_\cM$ to two states: $\tilde{\phi}_{\tilde{V}_1\dots \tilde{V}_n A_1}$, which can be acted on by $\cN$ and
$\hat{\phi}_{\hat{V}_1\dots \hat{V}_n A_2}$ which can be acted on by $\cM$.  We call such a mapping,  $\cD: \phi_{V_1\dots V_n A_1 A_2} \rightarrow (\tilde{\phi}_{\tilde{V}_1\dots \tilde{V}_n A_1},\hat{\phi}_{\hat{V}_1\dots \hat{V}_n A_2} )$  a {\em decoupling}.

There are two important types of decouplings that we consider: standard decouplings, and consistent decouplings. Both types of decouplings construct $\tilde{\phi}_{\tilde{V}_1\dots \tilde{V}_n A_1}$ from relabling the systems of
$I\otimes U_\cM \phi_{V_1\dots V_n A_1 A_2} I\otimes U_\cM^\dagger$ and construct $\hat{\phi}_{\hat{V}_1\dots \hat{V}_n A_2}$ by relabling the systems of $I\otimes U_\cN \phi_{V_1\dots V_n A_1 A_2} I\otimes U_\cN^\dagger$.
For a \emph{standard decoupling}, we have
$\tilde{V}_i  = \tilde{M}_2 V_i$ and $\hat{V}_i  = \hat{M}_1 V_i$ with $\tilde{M}_2 \in \cP(B_2E_2)$ and $\hat{M}_1 \in \cP(B_1E_1)$. For a \emph{consistent decoupling},
we require less: $\tilde{V}_i \in \cP(V_1...V_n B_2E_2)$ with $\tilde{V}_i \cap \tilde{V}_j = \varnothing$ and  $\hat{V}_i \in \cP(V_1...V_n B_1E_1)$ with $\hat{V}_i \cap \hat{V}_j = \varnothing$.

We say that  $f_\alpha(U_{\cN}, \phi_{V_1\dots V_n A})$ is \emph{uniformly} subadditive with respect to decoupling $\cD$ if for all $\cN_1$, $\cN_2$, and $\phi_{V_1\dots V_n A_1 A_2}$ we have

\begin{align}
f_\alpha(U_{\cN_1} \otimes U_{\cN_2}, \phi_{V_1\dots V_n A_1 A_2}) \leq f_\alpha(U_{\cN_1}, \tilde{\phi}_{\tilde{V}_1\dots \tilde{V}_n A_1}) + f_\alpha(U_{\cN_2}, \hat{\phi}_{\hat{V}_1\dots \hat{V}_n A_2}).
\end{align}

The following quantity will be useful:
\begin{equation}\begin{split} \label{eq:Delta-def}
\Delta(f_\alpha, U_{\cN_1},& U_{\cN_2},\phi_{V_1\dots V_n A_1 A_2},\tilde{\phi}_{\tilde{V}_1\dots \tilde{V}_n A_1}, \hat{\phi}_{\hat{V}_1\dots \hat{V}_n A_2}) \\
 =  f_\alpha(U_{\cN_1},& \tilde{\phi}_{\tilde{V}_1\dots \tilde{V}_n A_1}) + f_\alpha(U_{\cN_2}, \hat{\phi}_{\hat{V}_1\dots \hat{V}_n A_2}) -f_\alpha(U_{\cN_1} \otimes U_{\cN_2}, \phi_{V_1\dots V_n A_1 A_2}).
\end{split}\end{equation}

Defined in this way, $\Delta$ is linear in $f_\alpha$, so if we have
\begin{align*}
\Delta\left(f_{\alpha_1}, U_{\cN_1},U_{\cN_2},\phi_{V_1\dots V_n A_1 A_2},\tilde{\phi}_{\tilde{V}_1\dots \tilde{V}_n A_1}, \hat{\phi}_{\hat{V}_1\dots \hat{V}_n A_2}\right) &\geq 0\\
\Delta\left(f_{\alpha_2}, U_{\cN_1},U_{\cN_2},\phi_{V_1\dots V_n A_1 A_2},\tilde{\phi}_{\tilde{V}_1\dots \tilde{V}_n A_1}, \hat{\phi}_{\hat{V}_1\dots \hat{V}_n A_2}\right) &\geq 0
\end{align*}
then for $\lambda_1,\lambda_2\geq 0$ we also have
\begin{align}\label{Eq:PosLamb}
\Delta\left(\lambda_1 f_{\alpha_1} + \lambda_2 f_{\alpha_2}, U_{\cN_1},U_{\cN_2},\phi_{V_1\dots V_n A_1 A_2},\tilde{\phi}_{\tilde{V}_1\dots \tilde{V}_n A_1}, \hat{\phi}_{\hat{V}_1\dots \hat{V}_n A_2}\right) &\geq 0.
\end{align}

For the standard or consistent decouplings, the $\Delta$ function defined in Eq.~(\ref{eq:Delta-def}) depends only on the
decoupling $\cD$, the entropy formula $f_\alpha$ and  the state
\begin{align}
\label{eq:state-gen}
\rho_{V_1\ldots V_nB_1E_1B_2E_2}=(I\otimes U_{\cN_1} \otimes U_{\cN_2})\phi_{V_1\ldots V_nA_1A_2}(I\otimes U_{\cN_1}^\dagger \otimes U_{\cN_2}^\dagger).
\end{align}
So we abbreviate it as
$\Delta^{\cD}(\alpha,\rho_{V_1\ldots V_nB_1E_1B_2E_2})$. It is easy to see that any state $\rho_{V_1\ldots V_nB_1E_1B_2E_2}$ can be written the form of Eq.~(\ref{eq:state-gen}), with appropriate $ U_{\cN_1}$, $ U_{\cN_2}$ and $\phi_{V_1\ldots V_nA_1A_2}$. Thus $f_\alpha(U_{\cN}, \phi_{V_1\dots V_n A})$ is uniformly subadditive with respect to the decoupling $\cD$ if and only if
\[
  \forall\,\rho_{V_1\ldots V_nB_1E_1B_2E_2},\quad \Delta^{\cD}(\alpha,\rho_{V_1\ldots V_nB_1E_1B_2E_2})\geq 0.
\]

\section{non-infinite functions that are uniformly subadditive}

We will restrict our attention to entropic formulas $f_\alpha$ that are not always infinite: there is at least one $U_\cN$ such that $f_\alpha(U_\cN) < \infty$.  This requirement leads to a particularly nice structure on the $\alpha$'s of
a uniformly additive function.

\begin{lemma}\label{Lemma:SigmaZero}
Let $f_\alpha(U_{\cal N},\phi_{V_1\dots V_nA})$ satisfy
\begin{align*}
f_{\alpha}(U_{\cal N}) = \max_{\phi_{V_1\dots V_nA}} f_\alpha(U_{\cal N},\phi_{V_1\dots V_nA}) < \infty
\end{align*}
for \emph{some} $U_{\cal N}$ and
\begin{align*}
\min_{\rho}\Delta^{\cD}(\alpha,\rho_{V_1\ldots V_nB_1E_1B_2E_2}) \geq 0,
\end{align*}
for a standard decoupling $\cD$. In words, $f_\alpha$ is bounded and uniformly subadditive with respect to the standard decoupling $\cD$. Then for all non-empty $t \in {\cal P}(V_1 \dots V_n)$,
\begin{align*}
\eta_t:=\sum_{s \in {\cal P}(BE)}\alpha_{s,t} = 0.
\end{align*}
\end{lemma}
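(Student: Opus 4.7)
My plan is to use an ancilla blow-up: attaching $N$ independent copies of a classical ancilla state $\sigma_{W_1\ldots W_n}$ to the auxiliary systems (so that each $W_i$ sits inside $V_i$) makes both $f_\alpha$ and $\Delta^\cD$ grow linearly in $N$. Since $f_\alpha$ is bounded on some $\cN$ and $\Delta^\cD\geq 0$ by uniform subadditivity, the linear coefficient must vanish; varying $\sigma$ then yields enough equations on $(\eta_t)$ to pin each one to zero.

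The core computation is short. Because $\sigma^{\otimes N}$ is independent of $\phi$ and the channels, every entropy $H(s\cup t')$ with $s\in\cP(BE)$ and $t'\in\cP(V_1\ldots V_n)$ increases by exactly $Nh(t')$, where $h(t'):=H_\sigma\bigl(\{W_i:V_i\in t'\}\bigr)$. Summing against $\alpha$ gives
\[
f_\alpha(U_\cN,\phi\otimes\sigma^{\otimes N})=f_\alpha(U_\cN,\phi)+N\sum_{t'\in\cP(V_1\ldots V_n)}\gamma_{t'}\,h(t'),\qquad\gamma_{t'}:=\sum_{s\in\cP(BE)}\alpha_{s,t'},
\]
so that $\gamma_t=\eta_t$ for non-empty $t$. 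For a standard decoupling one has $\tilde V_i\supseteq V_i$ and $\hat V_i\supseteq V_i$, so each $W_i$ is carried along into $\tilde V_i$ and $\hat V_i$, and the same increment $N\sum\gamma h$ appears in $f_\alpha(U_\cN,\tilde\phi')$ and $f_\alpha(U_\cM,\hat\phi')$. Consequently
\[
\Delta^\cD(\alpha,\rho\otimes\sigma^{\otimes N})=\Delta^\cD(\alpha,\rho)+N\sum_{t'}\gamma_{t'}\,h(t').
\]

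Two inequalities now follow. Boundedness: pick $\cN$ with $f_\alpha(U_\cN)<\infty$ and any $\phi_0$; then $f_\alpha(U_\cN,\phi_0)+N\sum\gamma h\leq f_\alpha(U_\cN)$ for every $N$ forces $\sum\gamma h\leq 0$. Uniform subadditivity: $\Delta^\cD(\alpha,\rho\otimes\sigma^{\otimes N})\geq 0$ for every $N$ forces $\sum\gamma h\geq 0$. Hence $\sum_{t'}\gamma_{t'}h(t')=0$ for every entropy profile $h$ realizable by a classical ancilla.

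To finish I use the family of classical ancillas indexed by non-empty $T\subseteq\{V_1,\ldots,V_n\}$: $\sigma^T$ broadcasts a single uniform bit $Y$ into each $W_i$ with $V_i\in T$ and is deterministic on the remaining $W_i$. This gives $h^T(t')=[t'\cap T\neq\varnothing]$, hence $\sum_{t'\cap T\neq\varnothing}\gamma_{t'}=0$ for every non-empty $T$. Taking $T=\{V_1,\ldots,V_n\}$ yields $\sum_{t'\neq\varnothing}\gamma_{t'}=0$, and substituting back gives $F(U):=\sum_{t'\subseteq U,\,t'\neq\varnothing}\gamma_{t'}=0$ for every $U\subseteq\{V_1,\ldots,V_n\}$. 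Möbius inversion on the Boolean lattice then yields $\gamma_t=\eta_t=0$ for every non-empty $t$. The only subtle point to verify is the bookkeeping for the decoupled states: for a standard decoupling $W_i$ genuinely lands inside both $\tilde V_i$ and $\hat V_i$, so $\tilde\phi'$ and $\hat\phi'$ each pick up the same $N\sum\gamma h$ as $\phi'$; this is where the hypothesis that $\cD$ is \emph{standard} (rather than merely consistent) is essential.
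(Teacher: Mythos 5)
Your proof is correct and shares the paper's overall skeleton: force $\sum_{t\neq\varnothing}\eta_t H(t)\leq 0$ from boundedness and $\geq 0$ from uniform subadditivity, hence identically zero, and then extract $\eta_t=0$. You differ in two places, both mild but genuine. The paper proves the $\geq 0$ side by evaluating $\Delta^{\cD}$ at the single state $\rho_{V_1\dots V_n}\otimes\proj{0000}_{B_1E_1B_2E_2}$, where all entropies involving $B_1E_1B_2E_2$ vanish and a short computation gives $\Delta^{\cD}=\sum_t\eta_t H(t,\rho_{V_1\dots V_n})$; you instead reuse the ancilla blow-up for both halves, so that both inequalities fall out of the single identity $\Delta^{\cD}(\alpha,\rho\otimes\sigma^{\otimes N})=\Delta^{\cD}(\alpha,\rho)+N\sum_{t'}\gamma_{t'}h(t')$ — a cleaner, more parallel presentation (your bookkeeping that $W_i$ lands inside both $\tilde V_i$ and $\hat V_i$ under a \emph{standard} decoupling is exactly right and is the point at which standardness enters). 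For the final extraction, the paper appeals to Theorem~1 of Yeung~'97 (full-dimensionality of the classical entropy region) as a black box, whereas your broadcast-bit ancillas together with M\"obius inversion on the Boolean lattice give an explicit, self-contained proof of the same linear-independence fact. The core insight — extensivity of the formula in $N$ under tensoring an ancilla into the auxiliary registers — is identical in both proofs.
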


\begin{proof}
For a channel $\cN$ such that $f_{\alpha}(U_{\cN})< \infty$, considering a state of the form $\rho_{V_1\ldots
V_n}^{\otimes k}\otimes \rho_A$, we have
\begin{equation*}\begin{split}
   &f_\alpha(U_\cN,\rho_{V_1\ldots V_n}^{\otimes k}\otimes \rho_A) \\
  =&f_\alpha(U_\cN,\proj{0}_{V_1}\otimes\ldots\otimes\proj{0}_{V_n}\otimes \rho_A)+k \sum_{t \in  {\cal P}(V_1 \dots V_n)} \eta_t H(t,\rho_{V_1...V_n}).
\end{split}\end{equation*}
So we must have
\begin{align}\label{Eq:SumLEQ}
  \sum_{t \in  {\cal P}(V_1 \dots V_n)} \eta_t H(t,\rho_{V_1...V_n}) \leq 0,
\end{align}
because otherwise, the quantity $f_\alpha(U_\cN,\rho_{V_1\ldots V_n}^{\otimes k}\otimes \rho_A)$ would go to $\infty$ as $k \rightarrow \infty$. Now, in order for $f_\alpha$ to be uniformly subadditive with respect to the standard decoupling $\cD$, we need
\begin{align*}
\Delta^{\cD}\left(\alpha, \rho_{V_1...V_nB_1E_1B_2E_2} \right) \geq 0
\end{align*}
for all $\rho_{V_1...V_nB_1E_1B_2E_2}$. This implies
\begin{align}\label{Eq:SumGEQ}
\Delta^{\cD}\left(\alpha, \rho_{V_1...V_n}\otimes\proj{0000}_{B_1E_1B_2E_2} \right) = \sum_{t \in {\cal P}(V_1\dots V_n)} \eta_t H(t,\rho_{V_1\dots V_n}) \geq 0,
\end{align}
where we have used the fact that $H(s_1\tilde{X})+H(s_2\hat{X})-H(s_1s_2X)=H(X)$ for this state and any subset $X$ of systems $V_1\dots V_n$. Eq.~(\ref{Eq:SumLEQ}) and Eq.~(\ref{Eq:SumGEQ}) together imply that
\begin{align}
 \sum_{t \in {\cal P}(V_1\dots V_n)} \eta_t H(t,\rho_{V_1\dots V_n}) = 0
\end{align}
for all $\rho_{V_1\dots V_n}$.
This implies that each $\eta_t = 0$, by uniqueness results from the classical literature (Theorem 1 of \cite{Yeung97}).
\end{proof}

We let
\begin{align}
{\cal F} = \left\{ \alpha\,|\, \exists\, U_{\cN},\ f_\alpha(U_\cN)<+\infty\right\}
\end{align}
be the set of non-infinite entropy formulas .

\section{Quantum Entropy Inequalities}
\label{sec:Entropy Inequalities}
All known inequalities that constrain entropy allocations in multipartite quantum states can be derived from {\em strong subadditivity} \cite{LiebRuskai}, given by
\begin{align}
I(A;B|C):=H(AC)+H(BC)-H(ABC)-H(C)\geq 0. \label{eq:SSA}
\end{align}
Here $A$, $B$, and $C$ are arbitrary systems. Pippenger distinguished an independent set of basic inequalities on $n$ systems from which all other known inequalities arise as positive linear combinations \cite{Pip03}. These are (1) nonnegativity of entropy $H(A)\geq 0$, (2) strong subadditivity as stated above, (3)weak monotonicity $H(C|A) + H(C|B)\geq 0$, (4) subadditivity $I(A;B):=H(A)+H(B)-H(AB)\geq 0$ and (5) Araki-Lieb inequality $H(AB)+H(A)-H(B)\geq 0$.

\section{No Auxiliary Variables}
\label{sec:noaux}
There is only one standard decoupling, $\tilde{\phi}_{A_1}=\Tr_{A_2}(\phi_{A_1A_2})$ and $\hat{\phi}_{A_2}=\Tr_{A_1}(\phi_{A_1A_2})$, when there are no auxiliary variables. We now characterize the cone of uniformly additive linear entropic quantities. By the Minkowski-Weyl theorem, every polyhedron $P$ has a half-space or {\it H-representation} $P=\{x: Ax\leq b\}$ for some real matrix $A$ and vector $b$, and a vertex or {\it V-representation} $P=\mathrm{conv}(v_1,v_2,\dots,v_n)+\mathrm{nonneg}(r_1,r_2,\dots,r_s)$ where $v_1,v_2,\dots,v_n,r_1,r_2,\dots,r_s$ are real vectors, $\mathrm{conv}$ denotes the convex hull, and $\mathrm{nonneg}$ denotes non-negative linear combinations.

{\em Sufficient conditions:} The quantity
\begin{align}
f_{\alpha}(U_{\cN},\phi_A) = \lambda_1 H(B) + \lambda_2 H(E) + \lambda_3 H(B|E) + \lambda_4 H(E|B)
\end{align}
is uniformly subadditive for all $\lambda_i \geq 0$. To see this, note first that Eq.~$(\ref{eq:SSA})$ implies that $H(B_1B_2) \leq H(B_1)+ H(B_2)$
and $H(B_1B_2|E_1E_2)\leq H(B_1|E_1)+H(B_2|E_2)$. The other terms $H(E)$ and $H(E|B)$ are handled similarly.  We can then use Eq.~$(\ref{Eq:PosLamb})$ to show $f_{\alpha}$ is uniformly subadditive for $\lambda_i \geq 0$. This characterization of the uniform additivity cone is a V-representation where the quantities $H(B)$, $H(E)$, $H(B|E)$, and $H(E|B)$ are a set of extreme rays and the cone contains the origin.

{\em Necessary conditions:} First we express $f_{\alpha}$ in a slightly different way
\begin{align*}
f_{\alpha}(U_{\cN},\phi_A)  &= \lambda_1 H(B) + \lambda_2 H(E) + \lambda_3 H(B|E) + \lambda_4 H(E|B)\\
 & = (\lambda_1-\lambda_4)H(B) + (\lambda_2 - \lambda_3)H(E) + (\lambda_3+\lambda_4)H(BE),
\end{align*}
so that we have
\begin{align*}
f_{\alpha}(U_{\cN},\phi_A) = \alpha_B H(B) + \alpha_E H(E) + \alpha_{BE}H(BE)
\end{align*}
with $\alpha_B = \lambda_1-\lambda_4$, $\alpha_E = \lambda_2 - \lambda_3$, and $\alpha_{BE} = \lambda_3+\lambda_4$. The requirement that $\lambda_i \geq 0$ translates to the conditions
\begin{align}\label{eq:zeroauxhrep}
\alpha_B + \alpha_{BE} &\geq 0 \nonumber \\
\alpha_E + \alpha_{BE} &\geq 0 \nonumber \\
\alpha_B + \alpha_E + \alpha_{BE} &\geq 0 \nonumber \\
\alpha_{BE} &\geq 0.
\end{align}
This characterization of the uniform additivity cone is an H-representation where each inequality corresponds to a face of the cone.

Now we show that these are necessary for uniform subadditivity. To see this, compute
\begin{equation}\label{eq:deltanoaux}
\Delta(f_\alpha, p) = \alpha_B I(B_1;B_2) + \alpha_{E} I(E_1;E_2) + \alpha_{BE} I(B_1E_1; B_2E_2)
\end{equation}
where $p$ denotes a classical distribution on $B_1B_2E_1E_2$ corresponding to the channel output state. We will show that Eq.~$(\ref{eq:zeroauxhrep})$ are necessary by exhibiting distributions $p$ that lead to a negative value of $\Delta(f_\alpha,p)$ when any of the inequalities is violated.

First, suppose $\alpha_B + \alpha_{BE} <0$. Then, by choosing classical probability distribution $p$ such that $E_1 = E_2 =0$ and $B_1 =B_2 =R_1$, with $R_1$ a uniform random bit, we find $\Delta(f_\alpha,p) = \alpha_B + \alpha_{BE} < 0$. We can show $\alpha_E + \alpha_{BE} \geq 0$ is necessary for uniform subadditivity in a similar way. Now, supposing $\alpha_B + \alpha_E + \alpha_{BE}< 0$, we let $B_1=B_2=E_1=E_2 = R_1$ with $R_1$ a random uniform bit and find $\Delta(f_\alpha,p) = \alpha_B + \alpha_E + \alpha_{BE} < 0$. Finally, if $\alpha_{BE}<0$, we can let $B_1=R_1$, $B_2 = R_2$, $E_1 = R_1\oplus R_3$, $E_2 = R_2 \oplus R_3$ with $R_i$ independent random uniform bits.  In this case we find $\Delta(f_\alpha,p) = \alpha_{BE} < 0$.

\section{One Auxiliary Variable}
\label{sec:oneaux}
For one auxiliary variable $V$, there are several choices of standard decouplings taking a state $\phi_{VA_1A_2}$ to states $\tilde{\phi}_{\tilde{V}A_1}$ and $\hat{\phi}_{\hat{V}A_2}$. We define standard decouplings to have $\tilde{V} = \tilde{M}_2 V$ and $\hat{V} = \hat{M}_1 V$ where $\hat{M}_1$ is a collection of output systems from $\cN_1$ and $\tilde{M}_2$ is a collection of output systems from $\cN_2$. Associate integer labels to each collection according to $0,1,2,3 \leftrightarrow \varnothing, B, E, BE$. The standard decouplings are given by an ordered pair of integers $(a,b)$ where $a$ gives $\hat{M}_1$ and $b$ gives $\tilde{M}_2$. Table~\ref{tab:stddecouple} lists the inequivalent standard decouplings.

\begin{table}
\centering
\begin{tabular}{l|c|c|c|l}
case & (a,b) & $\hat{M}_1$ & $\tilde{M}_2$ & equivalents\\
\hline\hline
1. & (3,3) & $B_1E_1$ & $B_2E_2$ & none\\
\hline
2. & (3,1) & $B_1E_1$ & $B_2$ & (1,3), (3,2), (2,3)\\
\hline
3. & (3,0) & $B_1E_1$ & $\varnothing$ & (0,3)\\
\hline
4. & (1,1) &  $B_1$ & $B_2$ & (2,2)\\
\hline
5. & (1,2) & $B_1$ & $E_2$ & (2,1)\\
\hline
6. & (1,0) & $B_1$ & $\varnothing$ & (2,0), (0,1),(0,2)\\
\hline
7. & (0,0) & $\varnothing$ & $\varnothing$ & none
\end{tabular}
\caption{The inequivalent standard decouplings for one auxiliary variable. \label{tab:stddecouple}}
\end{table}

For one auxiliary variable,
\begin{align*}
\Delta^{(a,b)}(f_\alpha,\rho) & = \alpha_{B}I(B_1;B_2) + \alpha_{E}I(E_1;E_2) + \alpha_{BE}I(B_1E_1;B_2E_2) \\
& + \alpha_{V}(H(\tilde{V})+H(\hat{V})-H(V)) \\
& + \alpha_{BV}(H(B_1\tilde{V})+H(B_2\hat{V})-H(B_1B_2V)) \\
& + \alpha_{EV}(H(E_1\tilde{V})+H(E_2\hat{V})-H(E_1E_2V)) \\
& + \alpha_{BEV}(H(B_1E_1\tilde{V})+H(B_2E_2\hat{V})-H(B_1B_2E_1E_2V))
\end{align*}
can be rewritten as
\begin{equation}\label{eq:shorthandnotation}
\Delta^{(a,b)}(\alpha,\rho) = \Delta^\varnothing(\alpha^\varnothing,\rho) + \Delta^{V,(a,b)}(\alpha^V,\rho)
\end{equation}
where we have replaced $f_\alpha$ by the simpler notation $\alpha$ and
\begin{align}
\Delta^{\varnothing}(\alpha^{\varnothing},\rho) & = \alpha_{B}I(B_1;B_2) + \alpha_{E}I(E_1;E_2)  + \alpha_{BE}I(B_1E_1;B_2E_2) \label{eq:deltaempty} \\
\Delta^{V,(a,b)}(\alpha^{V},\rho) & = \left( \sum_{s \in \cP(BE)} \alpha_{sV}\right)H(\hat{M}_1\tilde{M}_2V) + \sum_{s \in \cP(BE)} \alpha_{sV} E_{sV}. \label{eq:deltaVpart}
\end{align}
In these expressions $\rho$ is the state at the channel outputs on which we evaluate the entropic quantities. The $(a,b)$ index labels the different decouplings we may choose, $\alpha^\varnothing = (\alpha_B,\alpha_E,\alpha_{BE})$, $\alpha^V = (\alpha_V,\alpha_{BV},\alpha_{EV},\alpha_{BEV})$, and $\alpha = (\alpha^\varnothing,\alpha^V)$. The first expression $\Delta^{\varnothing}$ is the same as Eq.~$(\ref{eq:deltanoaux})$ in the zero auxiliary case. For each $s$, the term corresponding to $\alpha_{sV}$ in the second expression has the entropic multiple
\begin{align}
E_{sV} = H(s_1\tilde{M}_2V)+ H(\hat{M}_1s_2V) - H(s_1s_2V) - H(\hat{M}_1\tilde{M}_2V). \label{eq:EsV}
\end{align}
If $s = \varnothing$, then Eq.~(\ref{eq:EsV}) takes the value $I(\hat{M_1}; \tilde{M_2}|V)$.  If $s=BE$, it takes the value $I(\hat{M}_1^c; \tilde{M}_2^c| \hat{M}_1\tilde{M}_2V)$ where superscript $c$ denotes the complement in $\{B_j,E_j\}$. The expression is more complicated for other values of $s$.  If $s=B$, it evaluates to expressions given in Table~\ref{tab:EsVB}, and if $s=E$ it evaluates to expressions in Table~\ref{tab:EsVE}.

We now show that the variables $\alpha^\varnothing$ and $\alpha^V$ can be separated and then prove that Figure 5 in the main text characterizes the uniformly additive formulas obtained using standard decouplings.

\subsection{Separation of Variables}

We would now like to show that the $V$-type terms and the $\varnothing$-type terms can be separated. Let
\begin{align}
\Pi^{\varnothing} & = \left \{ \alpha^\varnothing\ |\ \forall \rho,\  \Delta^{\varnothing}(\alpha^{\varnothing},\rho) \geq 0\right \} \cap \cF\nonumber \\
\Pi^{V, (a,b)} & = \left \{ \alpha^V\ |\ \forall \rho,\  \Delta^{V,(a,b)}(\alpha^{V},\rho) \geq 0\right \}\cap\cF\\
\Pi^{(a,b)} & = \left \{ \alpha\ |\ \forall \rho,\  \Delta^\varnothing (\alpha^\varnothing, \rho) + \Delta^{V,(a,b)}(\alpha^{V},\rho) \geq 0\right \}\cap\cF. \nonumber
\end{align}

\begin{lemma}[separation of variables]\label{lem:sep}
Let $\Pi^{\varnothing}$, $\Pi^{V, (a,b)}$, and $\Pi^{(a,b)}$ be as above.  Then
\begin{align*}
\Pi^{(a,b)} & = \{ (\alpha^\varnothing, \alpha^V)\ |\ \alpha^\varnothing \in \Pi^{\varnothing}\ \textrm{and}\ \alpha^V \in \Pi^{V, (a,b)}\}.
\end{align*}
\end{lemma}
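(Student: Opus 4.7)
The forward inclusion ($\supseteq$) follows immediately from linearity of $\Delta$ and the decomposition $\Delta^{(a,b)} = \Delta^\varnothing + \Delta^{V,(a,b)}$: if $\alpha^\varnothing \in \Pi^\varnothing$ and $\alpha^V \in \Pi^{V,(a,b)}$, then for every state $\rho$, $\Delta^{(a,b)}(\alpha,\rho)$ is a sum of two nonnegative quantities and is therefore $\geq 0$. The substance of the lemma is the reverse inclusion $\subseteq$. The key structural input is that $\Delta^\varnothing$ is a linear combination of entropies on subsets of $\{B_1,E_1,B_2,E_2\}$ alone, whereas $\Delta^{V,(a,b)}$ is a linear combination of entropies on subsets that all include $V$; moreover, Lemma~\ref{Lemma:SigmaZero} forces $\eta_V = 0$ on $\cF$, so $\Delta^{V,(a,b)} = \sum_s \alpha_{sV} E_{sV}$ with the $E_{sV}$ being the four-term expressions of Eq.~(\ref{eq:EsV}).

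To show $\alpha^\varnothing \in \Pi^\varnothing$, it is enough by Section~\ref{sec:noaux} to verify $\Delta^\varnothing(\alpha^\varnothing, \sigma) \geq 0$ on an arbitrary classical distribution $\sigma$ on $B_1E_1B_2E_2$. I plan to extend $\sigma$ to a joint state $\rho_{VB_1E_1B_2E_2}$ by letting $V$ be a classical copy of the $B_1E_1B_2E_2$ register. For this extension, $H(sV) = H(V) = H(B_1E_1B_2E_2)$ for every $s \subseteq \{B_1,E_1,B_2,E_2\}$, so every $E_{sV}$ vanishes; combined with $\eta_V = 0$ this gives $\Delta^{V,(a,b)}(\alpha^V, \rho) = 0$, hence $\Delta^\varnothing(\alpha^\varnothing, \sigma) = \Delta^{(a,b)}(\alpha, \rho) \geq 0$ by the hypothesis $\alpha \in \Pi^{(a,b)}$. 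Since $\sigma$ was arbitrary, $\alpha^\varnothing \in \Pi^\varnothing$.

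The harder direction is $\alpha^V \in \Pi^{V,(a,b)}$, which I plan to attack by contradiction. Assume some state $\rho_0$ satisfies $\Delta^{V,(a,b)}(\alpha^V, \rho_0) < 0$; the goal is to manufacture a state $\rho^\star$ with $\Delta^{(a,b)}(\alpha, \rho^\star) < 0$, contradicting $\alpha \in \Pi^{(a,b)}$. My plan is to take $\rho^\star$ of the form $\sum_v p_v \proj{v}_V \otimes \tau^v$ with a classical register $V$, which reduces $\Delta^{V,(a,b)}(\rho^\star)$ to $\sum_v p_v \Phi(\tau^v)$ for the ``$V$-trivialized'' linear functional $\Phi(\tau) = \sum_s \alpha_{sV}[H(s_1\tilde{M}_2) + H(\hat{M}_1 s_2) - H(s_1 s_2) - H(\hat{M}_1\tilde{M}_2)]_\tau$, while $\Delta^\varnothing(\rho^\star)$ depends only on the marginal $\sum_v p_v \tau^v$. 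The main obstacle is exhibiting $\tau^v$'s making $\Phi$ sufficiently negative without their marginal carrying a compensatingly large $\Delta^\varnothing$; this is where the structural independence of $\Delta^\varnothing$ and $\Delta^{V,(a,b)}$ --- they depend on disjoint coordinates of the entropy vector, on subsets of $\{B_1,E_1,B_2,E_2\}$ versus subsets containing $V$ --- should let the two quantities be driven independently, along the lines of the explicit classical test-state constructions of Section~\ref{sec:noaux}.
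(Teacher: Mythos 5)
Your forward inclusion and the $\alpha^\varnothing\in\Pi^\varnothing$ direction are correct and match the paper's proof exactly: you extend the offending classical distribution $p_1$ on $B_1E_1B_2E_2$ by appending a perfect classical copy as $V$, which kills every $E_{sV}$ (because $H(X|V)=0$ for all $X\subseteq B_1E_1B_2E_2$), so $\Delta^{(a,b)}=\Delta^\varnothing<0$. That half is fine.

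The $\alpha^V\in\Pi^{V,(a,b)}$ direction, however, has a genuine gap, and you correctly flag it yourself as ``the main obstacle'' without resolving it. You propose to restrict to states $\rho^\star=\sum_v p_v\proj{v}_V\otimes\tau^v$ with classical $V$ and then hope the ``structural independence'' of $\Delta^\varnothing$ and $\Delta^{V,(a,b)}$ lets you drive them independently, but this independence does not follow merely from the two functionals using disjoint entropy coordinates: for any fixed state the entropies are constrained (by SSA, etc.), and you never exhibit a state on which $\Delta^{V,(a,b)}<0$ while $\Delta^\varnothing\leq 0$. The paper's construction, which you are missing, starts from the arbitrary (possibly fully quantum) witness $\rho$ with $\Delta^{V,(a,b)}(\alpha^V,\rho)<0$ and forms
\begin{align*}
\sigma_{B_1B_2E_1E_2V'} = \frac{1}{d_B^4d_E^4}\sum_{i,j,k,l}(P_i\otimes P_j\otimes P_k\otimes P_l\otimes I_V)\,\rho\,(\cdot)^\dagger\otimes\proj{i,j,k,l}_{V_1},\quad V'=VV_1,
\end{align*}
i.e.\ an independent random Pauli twirl on each of $B_1,B_2,E_1,E_2$ with the choice of Paulis recorded in a fresh classical register $V_1$ adjoined to the auxiliary system. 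Unconditionally the twirl makes $B_1,B_2,E_1,E_2$ maximally mixed and mutually independent, so $\Delta^\varnothing(\alpha^\varnothing,\sigma)=0$; conditionally on $V'$ the Paulis are known and can be undone, so every $E_{sV'}$ (which after $\eta_V=0$ is a sum of entropies conditioned on $V'$) equals its value on $\rho$, giving $\Delta^{V,(a,b)}(\alpha^V,\sigma)=\Delta^{V,(a,b)}(\alpha^V,\rho)<0$. Note also that $\sigma$ has $V'$ generically quantum (it contains the original $V$), so your ansatz of a purely classical auxiliary register would not reproduce this construction; justifying a restriction to classical $V$ would require the case-by-case analysis of the five decouplings, which logically comes after this lemma and cannot be invoked here.
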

\begin{proof}
It is clear that $\Pi^{(a,b)} \supset  \Pi^{\varnothing}  + \Pi^{V, (a,b)}$, since if  $\Delta^{\varnothing}(\alpha^{\varnothing},\rho) \geq 0$ and   $ \Delta^{V,(a,b)}(\alpha^{V},\rho) \geq 0$  then $\Delta^{(a,b)}(\alpha,\rho)\geq 0$. We would also like to show that any $\alpha \in \Pi^{(a,b)}$ can be decomposed as $\alpha = (\alpha^\varnothing, \alpha^V)$ with $\alpha^\varnothing \in\Pi^{\varnothing}$ and $\alpha^V \in \Pi^{V, (a,b)}$.  To this end, let
$\alpha = (\alpha^\varnothing, \alpha^V) \in \Pi^{(a,b)}$.  To begin with, Lemma \ref{Lemma:SigmaZero} tells us that
$\alpha_V + \alpha_{BV}+\alpha_{EV} + \alpha_{BEV} = 0$.
This lets us rewrite Eq.~(\ref{eq:deltaVpart}) as
\begin{align*}
\Delta^{V,(a,b)}(\alpha^V,p) & = \sum_{s \in \cP(BE)} \alpha_{sV} E_{sV}
\end{align*}
with
\begin{align}
E_{sV} & = H(s_1\tilde{M}_2V)+ H(\hat{M}_1s_2V) - H(s_1s_2V) - H(\hat{M}_1\tilde{M}_2V) \nonumber \\
	 & = H(s_1\tilde{M}_2| V)+ H(\hat{M}_1s_2| V) - H(s_1s_2| V) - H(\hat{M}_1\tilde{M}_2| V). \label{Eq:conditionalEsv}
\end{align}

Now, suppose that $\alpha^\varnothing \not\in \Pi^{\varnothing}$.  In that case, as shown in Section~\ref{sec:noaux}, there is a \emph{classical} probability distribution
 $p_1$ on $B_1B_2E_1E_2$ such that $\Delta^\varnothing(\alpha^\varnothing,p_1) < 0$.  However, we can now extend $p_1$ to $V$ by letting $V = (B_1,B_2,E_1,E_2)$ be a perfectly correlated copy of $B_1B_2E_1E_2$.
From Eq.(\ref{Eq:conditionalEsv}) we see that $E_{sV}(p_1)$ is a sum of entropies of subsets of $B_1B_2E_1E_2$ conditioned on $V$, so $E_{sV}(p_1) = 0$ and therefore $\Delta^{V,(a,b)}(\alpha,p_1) = 0$.  But this means that
\begin{align*}
\Delta(\alpha,p_1) = \Delta^\varnothing (\alpha^\varnothing,p_1) +  \Delta^{V, (a,b)} (\alpha^V,p_1) = \Delta^\varnothing (\alpha^\varnothing,p_1) < 0,
\end{align*}
so we must have $\alpha \not\in \Pi^{(a,b)}$ after all.

Now, suppose $\alpha = (\alpha^\varnothing, \alpha^V) \in \Pi^{(a,b)}$, but $\alpha^V \not \in \Pi^{V,(a,b)}$.  This means that there is some $\rho_{B_1E_1B_2E_2V}$ such that $\Delta^{V,(a,b)}(\alpha^V, \rho) < 0$. We use this $\rho$ to define a new state,
\begin{align*}
\sigma_{B_1B_2E_1E_2V^\prime} & = \frac{1}{d_B^4d_E^4}\sum_{i,j}\sum_{k,l} (P_i \otimes P_{j}\otimes P_{k}\otimes P_{l}\otimes I_V) \rho  (P_i^\dagger \otimes P_{j}^\dagger\otimes P_{k}^\dagger\otimes P_{l}^\dagger \otimes I_V) \otimes \proj{i,j,k,l}_{V_1},
\end{align*}
where $i,j=1\dots d_B^2$, $k,l=1\dots d_E^2$, $P_i$ and $P_j$ label the Pauli matrices on $B$, $P_k$ and $P_l$ label the Pauli matrices on $E$, and $V^\prime = VV_1$.  This state is constructed so that
\begin{align*}
\Delta^\varnothing(\alpha^\varnothing,\sigma) & = 0\\
\Delta^{V, (a,b)}(\alpha^V,\sigma) & = \Delta^{V, (a,b)}(\alpha^V,\rho).
\end{align*}
As a result, we also find that
\begin{align*}
\Delta^{(a,b)}(\alpha,\sigma) = \Delta^\varnothing(\alpha^\varnothing,\sigma) + \Delta^{V, (a,b)}(\alpha^V,\sigma) = \Delta^{V, (a,b)}(\alpha^V,\rho)  < 0,
\end{align*}
so that we have $\alpha \not \in \Pi^{(a,b)}$ in this case too.
\end{proof}

\begin{table}
\centering
\begin{tabular}{c|c|l}
$\hat{M}_1$ & $\tilde{M}_2$ & expression \\
\hline\hline
  $\varnothing$ & $\varnothing$ & $I(B_1;B_2|V)$ \\
\hline
$B_1$ & $\varnothing$ & 0 \\
\hline
$E_1$ & $\varnothing$ & $H(B_2|E_1V) - H(B_2|B_1V) = I(B_1;B_2|V) - I(E_1;B_2|V)$  \\
\hline
$B_1E_1$ & $\varnothing$ & $-I(E_1;B_2|B_1V)$ \\
\hline
  $\varnothing$ & $B_2$ & 0 \\
\hline
$B_1$ & $B_2$ & 0 \\
\hline
$E_1$ & $B_2$ & 0 \\
\hline
$B_1E_1$ & $B_2$ & 0 \\
\hline
  $\varnothing$ & $E_2$ & $H(B_1|E_2V) - H(B_1|B_2V) = I(B_1;B_2|V) - I(B_1;E_2|V)$\\
\hline
$B_1$ & $E_2$ & 0 \\
\hline
$E_1$ & $E_2$ & $H(B_1E_2V)+H(E_1B_2V)- H(B_1B_2V)-H(E_1E_2V)$ \\
\hline
$B_1E_1$ & $E_2$ & $I(E_1;E_2|B_1V) - I(E_1;B_2|B_1V) $\\
\hline
  $\varnothing$ & $B_2E_2$ & $-I(B_1;E_2|B_2V)$ \\
\hline
$B_1$ & $B_2E_2$ & 0 \\
\hline
$E_1$ & $B_2E_2$ & $I(E_2;E_1|B_2V) - I(E_2;B_1|B_2V)$ \\
\hline
$B_1E_1$ & $B_2E_2$ & $I(E_1;E_2|B_1B_2V)$
\end{tabular}
\caption{The entropic quantity $E_{sV}$ in Eq.~(\ref{eq:EsV}) evaluates to these expressions when $s=B$.\label{tab:EsVB}}
\end{table}

\begin{table}
\centering
\begin{tabular}{c|c|l}
$\hat{M}_1$ & $\tilde{M}_2$ & expression\\
\hline\hline
  $\varnothing$ & $\varnothing$ & $I(E_1;E_2|V)$ \\
\hline
$E_1$ & $\varnothing$ & 0 \\
\hline
$B_1$ & $\varnothing$ & $H(E_2|B_1V) - H(E_2|E_1V) = I(E_1;E_2|V) - I(B_1;E_2|V)$  \\
\hline
$E_1B_1$ & $\varnothing$ & $-I(B_1;E_2|E_1V)$ \\
\hline
  $\varnothing$ & $E_2$ & 0 \\
\hline
$E_1$ & $E_2$ & 0 \\
\hline
$B_1$ & $E_2$ & 0 \\
\hline
$E_1B_1$ & $E_2$ & 0 \\
\hline
  $\varnothing$ & $B_2$ & $H(E_1|B_2V) - H(E_1|E_2V) = I(E_1;E_2|V) - I(E_1;B_2|V)$\\
\hline
$E_1$ & $B_2$ & 0 \\
\hline
$B_1$ & $B_2$ & $H(E_1B_2V)+H(B_1E_2V)- H(E_1E_2V)-H(B_1B_2V)$ \\
\hline
$E_1B_1$ & $B_2$ & $I(B_1;B_2|E_1V) - I(B_1;E_2|E_1V) $\\
\hline
  $\varnothing$ & $E_2B_2$ & $-I(E_1;B_2|E_2V)$ \\
\hline
$E_1$ & $E_2B_2$ & 0 \\
\hline
$B_1$ & $E_2B_2$ & $I(B_2;B_1|E_2V) - I(B_2;E_1|E_2V)$ \\
\hline
$E_1B_1$ & $E_2B_2$ & $I(B_1;B_2|E_1E_2V)$
\end{tabular}
\caption{The entropic quantity $E_{sV}$ in Eq.~(\ref{eq:EsV}) evaluates to these expressions when $s=E$.\label{tab:EsVE}}
\end{table}

For each standard decoupling, we want to identify parameters $\alpha$ such that $\Delta^{(a,b)}\geq 0$ for all states on systems $B_1B_2E_1E_2V$. We use Lemma~\ref{lem:sep}, and our earlier characterization of the $\alpha^\varnothing$ satisfying $\Delta^{\varnothing}\geq 0$, to separate variables and focus solely on $\Delta^{V,(a,b)}$. Recall also that $\alpha_V + \alpha_{BV}+\alpha_{EV} + \alpha_{BEV} = 0$ for all standard decouplings. In what follows, let $R_1$, $R_2$, and $R_3$ denote independent uniform 0-1 random variables.

\subsection{Case 1: (3,3) decoupling}

Here we have $\hat{M}_1 = B_1E_1$ and $\tilde{M}_2 = B_2E_2$. We want to compute $\Delta^{V,(3,3)} = \sum_{s \in \cP(BE)} \alpha_{sV} E_{sV}$. For $s = \varnothing$ and $s = BE$ we have $E_{V} = I(B_1E_1;B_2E_2|V)$ and $E_{BEV} = 0$. Consulting Table~\ref{tab:EsVB} and \ref{tab:EsVE}, we find $E_{BV} = I(E_1;E_2|B_1B_2V)$ and $E_{EV} = I(B_1;B_2|E_1E_2V)$ so that
\begin{align}
\Delta^{V,(3,3)} & =\alpha_VI(B_1E_1;B_2E_2|V)  + \alpha_{BV}I(E_1;E_2|B_1B_2V) + \alpha_{EV}I(B_1;B_2|E_1E_2V).
\end{align}
We now need necessary and sufficient conditions on $\alpha$ for $\Delta^{V,(3,3)} \geq 0$.

{\em Necessary conditions:} The conditions
\begin{align}
\alpha_V + \alpha_{BV}+\alpha_{EV} & \geq 0 \label{Eq:Case1:cond1}\\
\alpha_V + \alpha_{BV} & \geq 0\label{Eq:Case1:cond2}\\
\alpha_V + \alpha_{EV} & \geq 0\label{Eq:Case1:cond3}\\
\alpha_V &\geq 0\label{Eq:Case1:cond4}
\end{align}
are necessary for positivity of $\Delta^{V,(3,3)}$. To see the necessity of Eq.~(\ref{Eq:Case1:cond1}), choose $B_1=R_1$, $B_2=R_2$, $E_1=R_1\oplus R_3$, $E_2=R_2\oplus R_3$ and $V = 0$. This give us a distribution with $I(B_1E_1; B_2E_2|V) = 1$, $I(B_1; B_2| E_1E_2V) = 1$, and $I(E_1; E_2|B_1B_2V) = 1$, and so we find Eq.~(\ref{Eq:Case1:cond1}). Eq.~(\ref{Eq:Case1:cond2}) can be seen by choosing $B_1=R_1$, $B_2=R_1\oplus R_2$, $E_1=0$, $E_2=R_2$, and $V = 0$ which results in $I(B_1E_1; B_2E_2|V) = 1$, $I(B_1; B_2| E_1E_2V) = 1$, and $I(E_1; E_2|B_1B_2V) = 0$. Eq.~(\ref{Eq:Case1:cond3}) can be seen in a similar fashion. Finally, to see Eq.~(\ref{Eq:Case1:cond4}), let $B_1 = R_1$, $B_2 = 0$, $E_1 =0$ , $E_2 = R_1$, and $V = 0$.

{\em Sufficient conditions:} We will now show that the necessary conditions for positivity of $\Delta^{V,(3,3)}$ are also sufficient. There are four cases to consider. Suppose first that $\alpha_V,\alpha_{BV},\alpha_{EV} \geq 0$. The positivity of conditional mutual information, and thus the relevant $E_{sV}$'s, makes $\Delta^{V,(3,3)}\geq 0$ immediately. Suppose next that $\alpha_{BV}\geq 0$ and $\alpha_{EV}<0$. In this case, we have
\begin{align*}
\Delta^{V,(3,3)} & =\alpha_VI(B_1E_1;B_2E_2|V)  + \alpha_{BV}I(E_1;E_2|B_1B_2V) + \alpha_{EV}I(B_1;B_2|E_1E_2V)\\
& = (\alpha_V+\alpha_{EV})I(B_1E_1;B_2E_2|V) + \alpha_{BV}I(E_1;E_2|B_1B_2V) + |\alpha_{EV}|\left[ I(B_1E_1;B_2E_2|V) - I(B_1;B_2|E_1E_2V)\right]\\
& = (\alpha_V+\alpha_{EV})I(B_1E_1;B_2E_2|V) + \alpha_{BV}I(E_1;E_2|B_1B_2V) + |\alpha_{EV}|\left[ I(E_1;B_2|E_2V) + I(B_1E_1;B_2|V)\right] \geq 0,
\end{align*}
where we have used
\begin{align*}
I(B_1;B_2|E_1E_2V) & = I(B_1E_1;B_2E_2|V) - I(E_1;B_2|E_2V) - I(B_1E_1;B_2|V)
\end{align*}
and the positivity of conditional mutual information. The case $\alpha_{BV}<0$ and $\alpha_{EV}\geq 0$ follows the same argument as the second case. Finally suppose that $\alpha_{BV}<0$ and $\alpha_{EV}<0$. In this case we have
\begin{align*}
\Delta^{V,(3,3)} & =\alpha_VI(B_1E_1;B_2E_2|V)  + \alpha_{BV}I(E_1;E_2|B_1B_2V) + \alpha_{EV}I(B_1;B_2|E_1E_2V)\\
& = (\alpha_V+\alpha_{BV} + \alpha_{EV})I(B_1E_1;B_2E_2|V) + |\alpha_{BV}|\left[I(B_1E_1;B_2E_2|V) -I(E_1;E_2|B_1B_2V)\right] \\
& \ \ + |\alpha_{EV}|\left[ I(B_1E_1;B_2E_2|V) - I(B_1;B_2|E_1E_2V)\right]\\
& = (\alpha_V+\alpha_{BV}+\alpha_{EV})I(B_1E_1;B_2E_2|V) + |\alpha_{BV}|\left[  I(B_1;E_2|B_2V) +  I(B_1E_1;E_2|V)  \right] \\
& \ \ + |\alpha_{EV}|\left[ I(E_1;B_2|E_2V) + I(B_1E_1;B_2|V)\right] \geq 0.
\end{align*}

\subsection{Case 2: (3,1) decoupling}

Here we have $\hat{M}_1 = B_1E_1$  and $\tilde{M}_2 = B_2$.
For $s = \varnothing$ and $s = BE$ we have $E_{V} = I(B_1E_1;B_2|V)$ and
$E_{BEV} = 0$, respectively, while for $s = B$ and $s=E$, we find $E_{BV} = 0$ and $E_{EV} = I(B_1;B_2|E_1V) - I(B_1;E_2|E_1V)$ from Table~\ref{tab:EsVB} and \ref{tab:EsVE}. This gives us
\begin{equation}
\Delta^{V,(3,1)} = \alpha_VI(B_1E_1;B_2|V)  + \alpha_{EV}\left[ I(B_1;B_2|E_1V) - I(B_1;E_2|E_1V) \right].
\end{equation}

{\em Necessary conditions:} We wish to show that in order to have $\Delta^{V,(3,1)} \geq 0$ for all distributions, we need
\begin{align}
\alpha_{EV} &\leq 0\label{Eq:Case2:cond2}\\
\alpha_{V}+\alpha_{EV} &\geq 0 \label{Eq:Case2:cond3}.
\end{align}
To see that Eq.~(\ref{Eq:Case2:cond2}) is true, choose $B_1 =E_2 = R_1$, and $E_1=B_2 =V = 0$ to get $\Delta^{V,(3,1)} = -\alpha_{EV}$, so that $\alpha_{EV} \leq 0$. To see that Eq.~(\ref{Eq:Case2:cond3}) is necessary, choose $E_2=E_1=V=0$ and $B_1=B_2=R_1$.  Then $\Delta^{V,(3,1)}=\alpha_V+\alpha_{EV}$ which means we need $\alpha_V+\alpha_{EV}\geq 0$.

{\em Sufficient conditions:} Let $\alpha_{EV} \leq 0$ and $\alpha_V + \alpha_{EV} \geq 0$.  Then
\begin{align*}
\Delta^{V,(3,1)} & = \alpha_VI(B_1E_1;B_2|V)  + \alpha_{EV}\left[ I(B_1;B_2|E_1V) - I(B_1;E_2|E_1V) \right]\\
& =  (\alpha_V - |\alpha_{EV}|)I(B_1E_1;B_2|V)  + |\alpha_{EV}|\left[  I(B_1E_1;B_2|V)- I(B_1;B_2|E_1V)\right]  + |\alpha_{EV}| I(B_1;E_2|E_1V) \\
& =  (\alpha_V + \alpha_{EV})I(B_1E_1;B_2|V) + |\alpha_{EV}|\left[  I(E_1;B_2|V)\right] + |\alpha_{EV}| I(B_1;E_2|E_1V) \geq 0.
\end{align*}

\subsection{Case 3: (3,0) decoupling}

Here we have $\hat{M}_1 = B_1E_1$ and $\tilde{M}_2 = \varnothing$ which leads to $E_V = 0$, $E_{BEV} = 0$, $E_{BV} = -I(E_1;B_2|B_1V)$, and $E_{EV} = -I(B_1;E_2| E_1V)$. Therefore,
\begin{equation}
\Delta^{V,(3,0)} = -\alpha_{BV}I(E_1;B_2|B_1V)  -\alpha_{EV}I(B_1;E_2|E_1V). \label{eq:30dec}
\end{equation}

{\em Necessary conditions:} We will need to have
\begin{align}
\alpha_{BV}&\leq 0 \label{eq:30_1}\\
\alpha_{EV} &\leq 0. \label{eq:30_2}
\end{align}
To see Eq.~(\ref{eq:30_1}), choose $E_1 = B_2 = R_1$ and $E_2=B_1=V=0$ to get $\Delta^{V,(3,0)} = -\alpha_{BV} \geq 0$. Similarly, choosing $B_1=E_2=R_1$ and $E_1=B_2=V=0$ gives $\Delta^{V,(3,0)} = -\alpha_{EV} \geq 0$ and Eq.~(\ref{eq:30_2}).

{\em Sufficient conditions:} Eq.(\ref{eq:30dec}) is explicitly nonnegative when $\alpha_{BV}\leq 0$ and $\alpha_{EV} \leq 0$.

\subsection{Case 4: (1,1) decoupling}

Here we have $\hat{M}_1 = B_1$ and $\tilde{M}_2 = B_2$, which gives
$E_V = I(B_1;B_2| V)$, $E_{BEV} = I(E_1;E_2| B_1B_2V)$, $E_{BV} = 0$, and
$E_{EV} =  H(E_1B_2V) + H(B_1E_2V) - H(B_1B_2V) - H(E_1E_2V)$. Therefore
\begin{align}
\Delta^{V,(1,1)} & = \alpha_V I(B_1;B_2| V) + \alpha_{EV}\left[H(E_1B_2V) + H(B_1E_2V) - H(B_1B_2V) - H(E_1E_2V) \right] \\
& + \alpha_{BEV}I(E_1;E_2| B_1B_2V).
\end{align}

{\em Necessary conditions:} We need to have
\begin{align}
\alpha_{EV} &= 0 \label{eq:11_1}\\
\alpha_V &\geq 0 \label{eq:11_2} \\
\alpha_{BEV}&\geq 0. \label{eq:11_3}
\end{align}
Choosing $B_1 = E_2 = R_1$ and $E_1 = B_2 =V=0$, we find $-\alpha_{EV} \geq 0$ so that $\alpha_{EV}\leq 0$. Choosing $B_1 = R_1$, $B_2 = R_2$, $E_1 =E_2= R_1 \oplus R_2$, and $V=0$, we find $\Delta^{V,(1,1)} = \alpha_{EV} \geq 0$, so that $\alpha_{EV} = 0$, showing Eq.(\ref{eq:11_1}). Thus, we have
\begin{equation*}
\Delta^{V,(1,1)} = \alpha_V I(B_1;B_2| V) + \alpha_{BEV}I(E_1;E_2| B_1B_2V),
\end{equation*}
from which we see Eq.~(\ref{eq:11_2}) and Eq.~(\ref{eq:11_3}).

{\em Sufficient conditions:} The sufficiency of $\alpha_V\geq 0$, $\alpha_{BEV}\geq 0$ and $\alpha_{EV}=0$ is immediate from positivity of conditional mutual information.

\subsection{Case 5: (1,2) decoupling}

Here we have $\hat{M}_1 = B_1$ and $\tilde{M}_2 = E_2$, which gives $E_V = I(B_1;E_2|V)$, $E_{BEV} = I(E_1;B_2|B_1E_2V)$, $E_{BV} = 0$, and $E_{EV} = 0$. This leads to
\begin{equation*}
\Delta^{V,(1,1)} = \alpha_V I(B_1;E_2|V) + \alpha_{BEV} I(E_1;B_2|B_1E_2V).
\end{equation*}

{\em Necessary conditions: } We need to have
\begin{align}
\alpha_V &\geq 0 \label{eq:12_1} \\
\alpha_{BEV} &\geq 0. \label{eq:12_2}
\end{align}
Choosing $B_1 = E_2 =V =0$, and $E_1 = B_2 = R_1$, we get Eq.(\ref{eq:12_2}).
Letting $B_1 = B_2 =E_1=E_2=R_1$ and $V=0$ we get Eq.(\ref{eq:12_1}).

{\em Sufficient conditions: } Sufficiency is immediate from positivity of conditional mutual information.

\section{Multiple Auxiliary Variables}
\label{sec:multiple-variables}
We now consider the general case with multiple auxiliary variables $V_1,\ldots,V_n$. We will prove that we can separate the variables, similar to the one-variable case. As a result, under a standard decoupling, the cone of uniformly additive entropic formulas is decomposed into a sum of smaller cones, each of which involves one specific subset of the auxiliary variables. Furthermore, the characterizations of these smaller cones is identical with the ones for zero and one auxiliary variable, which we have given in the previous sections. This will finish the characterization of the additive cone under standard decouplings.

Let $\rho_{V_1\ldots V_nB_1E_1B_2E_2}=(I\otimes U_{\cN_1} \otimes U_{\cN_2})\phi_{V_1\ldots V_nA_1A_2}(I\otimes U_{\cN_1}^\dagger \otimes U_{\cN_2}^\dagger)$ be a state generated by the channels $\cN_1$ and $\cN_2$. We are considering entropic quantities evaluated on systems $V_1...V_nB_1E_1B_2E_2$.  A standard decoupling is an assignment $\tilde{V}_i =  N^i_2 V_i$, $\hat{V}_i = N^i_1 V_i$, where $N^i_1$ is picked from $\cP(B_1E_1)$ and $N^i_2$ is picked from $\cP(B_2E_2)$. We require the decoupling to be consistent: each of $B_2$ and $E_2$ appears in at most one $N_2^i$ and each of $B_1$ and $E_1$ appears in at most one $N_1^i$, such that the new auxiliary variables have no overlaps. A consistent standard decoupling will be indexed by $(a_1,b_1)...(a_n,b_n)$.

Let $J\subseteq [n]:=\{1,2,\dots,n\}$ be a set of indices and $V_J$ denote the collection of systems $V_1\dots V_n$ indexed by $J$. Likewise let $N_1^J$ and $N_2^J$ denote collections of systems $\{N_1^i\}$ and $\{N_2^i\}$ respectively. Note that $V_\varnothing=\varnothing$, $N_1^\varnothing=\varnothing$ and $N_2^\varnothing=\varnothing$. For $\alpha$ being the coefficient vector of an entropy formula $f_\alpha$, we let $\alpha^{V_J}=(\alpha_{V_J},\alpha_{BV_J},\alpha_{EV_J},\alpha_
{BEV_J})$. In Lemma~\ref{Lemma:SigmaZero} we found that if $f_\alpha$ is bounded and uniformly additive with respect to a standard decoupling, then for all $J$ it must hold that
\begin{align}
\label{Eq:Sum-V-zero}
\alpha_{V_J}+\alpha_{BV_J}+\alpha_{EV_J}+\alpha_{BEV_J}=0.
\end{align}
So in the following we assume Eq.(\ref{Eq:Sum-V-zero}). Thus, we can write
\begin{align}
\Delta^{V_J,(a_J,b_J)}(\alpha^{V_J},\rho) &= \sum_{s \in \cP(BE)}\alpha_{sV_J}\Big(H(s_1N^{J}_2|V_J) + H(N^{J}_1s_2|V_J)- H(s_1s_2|V_J) - H(N^J_1N^J_2|V_J)\Big), \label{eq:delta-part}\\
\Delta^{(a_1,b_1)...(a_n,b_n)}(\alpha,\rho) &= \sum_{J} \Delta^{V_J,(a_J,b_J)}(\alpha^{V_J},\rho), \label{eq:delta-total}
\end{align}
where $(a_J,b_J)$ tells us which systems from $\{B_1,E_1,B_2,E_2\}$ go with $V_J$ and is induced from $(a_1,b_1)...(a_n,b_n)$. We can now define the cones
\begin{align*}
\Pi^{V_J,(a_J,b_J)}&=\left\{\alpha^{V_J}|\ \forall \rho,\ \Delta^{V_J,(a_J,b_J)}(\alpha^{V_J},\rho)\geq 0\right\} \cap \cF, \\
\Pi^{(a_1,b_1)...(a_n,b_n)}&=\left\{\alpha|\ \forall \rho,\ \Delta^{(a_1,b_1)...(a_n,b_n)}(\alpha,\rho)\geq 0\right\}\cap \cF.
\end{align*}

If $V_J=\varnothing$, the characterization of $\Pi^{V_J,(a_J,b_J)}$ has been given in Section~\ref{sec:noaux}. Note that in this case $a_J=0$ and $b_J=0$ correspond to empty sets and they are meaningless. If $V_J\neq \varnothing$, we can regard $V_J$ as a single auxiliary variable and find the explicit description of $\Pi^{V_J,(a_J,b_J)}$ in Section~\ref{sec:oneaux}. On the other hand, the cone $\Pi^{(a_1,b_1)...(a_n,b_n)}$ includes all the uniformly additive quantities $f_\alpha$, under the decoupling $(a_1,b_1)...(a_n,b_n)$. Our main result in this section is the following Theorem~\ref{thm:separation}, which gives a simple characterization of $\Pi^{(a_1,b_1)...(a_n,b_n)}$, in terms of $\Pi^{V_J,(a_J,b_J)}$.

\begin{theorem}
  \label{thm:separation}
Given $\alpha$, we have
\begin{align}
  \label{eq:whole-cone}
  \alpha \in \Pi^{(a_1,b_1)...(a_n,b_n)}
\end{align}
if and only if
\begin{align}
  \label{eq:sub-cone}
  \forall J\subseteq [n],\quad \alpha^{V_J} \in \Pi^{V_J,(a_J,b_J)}.
\end{align}
\end{theorem}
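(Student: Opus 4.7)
The sufficiency direction is immediate from Eq.~(\ref{eq:delta-total}): if each $\alpha^{V_J}\in\Pi^{V_J,(a_J,b_J)}$ then every summand $\Delta^{V_J,(a_J,b_J)}(\alpha^{V_J},\rho)$ is nonnegative for every $\rho$, so the total $\Delta^{(a_1,b_1)\dots(a_n,b_n)}(\alpha,\rho)$ is nonnegative and hence $\alpha\in\Pi^{(a_1,b_1)\dots(a_n,b_n)}$. The content of the theorem is the converse, which I plan to prove by the contrapositive.

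Suppose some $\alpha^{V_{J_0}}\notin\Pi^{V_{J_0},(a_{J_0},b_{J_0})}$, witnessed by a state $\sigma_{V_{J_0}B_1E_1B_2E_2}$ with $\Delta^{V_{J_0},(a_{J_0},b_{J_0})}(\alpha^{V_{J_0}},\sigma)<0$. The strategy is to build a global state $\rho_{V_1\ldots V_nB_1E_1B_2E_2}$ whose restriction to $V_{J_0}B_1E_1B_2E_2$ reproduces (a suitably twirled version of) $\sigma$ while every other contribution $\Delta^{V_J,(a_J,b_J)}(\alpha^{V_J},\rho)$ vanishes. Then Eq.~(\ref{eq:delta-total}) will yield $\Delta^{(a_1,b_1)\dots(a_n,b_n)}(\alpha,\rho)<0$, certifying $\alpha\notin\Pi^{(a_1,b_1)\dots(a_n,b_n)}$.

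The construction would have two ingredients. First, to kill the subsets $J$ with $J\cap J_0^c\neq\varnothing$: for each $i\notin J_0$ introduce $V_i$ as a classical perfect copy of $B_1E_1B_2E_2$. Then for any such $J$, $V_J$ contains a perfect classical copy of $B_1E_1B_2E_2$, so $H(X|V_J)=0$ for every $X\subseteq B_1E_1B_2E_2$. Invoking the boundedness constraint $\alpha_{V_J}+\alpha_{BV_J}+\alpha_{EV_J}+\alpha_{BEV_J}=0$ supplied by Lemma~\ref{Lemma:SigmaZero}, the conditional form of Eq.~(\ref{eq:delta-part}) then forces $\Delta^{V_J,(a_J,b_J)}(\alpha^{V_J},\rho)=0$ on these subsets. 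Second, to kill the remaining proper subsets $J\subsetneq J_0$ (including $J=\varnothing$): adapt the Pauli-twirl trick of Lemma~\ref{lem:sep}, applying independent random Paulis to $B_1,E_1,B_2,E_2$ and recording the outcomes in an ancilla that is appended to every $V_i$ for $i\in J_0$. Because the ancilla is visible inside $V_{J_0}$, the twirl does not change $\Delta^{V_{J_0}}(\alpha^{V_{J_0}},\sigma)$; because the ancilla is absent from $V_J$ for $J\subsetneq J_0$, the twirl decouples $V_J$ from $B_1E_1B_2E_2$ at the marginal level, wiping out the corresponding conditional-mutual-information expressions underlying Eq.~(\ref{eq:delta-part}).

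The main obstacle I anticipate is executing Step two uniformly in $J$: one must simultaneously annihilate all $2^{|J_0|}-1$ proper-subset contributions without disturbing either the $J_0$-term (which must remain strictly negative) or the $J\cap J_0^c\neq\varnothing$ terms (which were already killed in Step one). I expect this to require either a single combined twirl designed so that partial tracing over any proper subset of the ancillas produces the maximally mixed channel on $B_1E_1B_2E_2$, or an induction on $|J_0|$ peeling off one element at a time. The base case $|J_0|\leq 1$ reduces exactly to Lemma~\ref{lem:sep}; the inductive step uses the twirl at the current level together with the hypothesis to dispatch the lower-order subsets.
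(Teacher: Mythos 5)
Your outline matches the paper's proof strategy (contrapositive, then construct a global state whose $V_{J_0}$-contribution is strictly negative while every other subset's contribution vanishes), but the explicit construction you write down in ``step two'' would fail. If the twirl outcomes are recorded ``in an ancilla that is appended to \emph{every} $V_i$ for $i\in J_0$,'' then any single nonempty $J\subsetneq J_0$ already sees the \emph{entire} twirl key; conditioning on $V_J$ undoes the twirl, the bracket in Eq.~(\ref{eq:delta-part}) evaluates on the original state rather than collapsing, and $\Delta^{V_J,(a_J,b_J)}$ is not forced to vanish. You correctly diagnose what is needed---a twirl whose key looks maximally mixed after tracing out any proper subset of the ancillas---but you do not build it, and that construction is the whole content of the hard direction.

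The paper's Lemma~\ref{Lemma:distribution-transform} builds it by a $k$-out-of-$k$ additive secret sharing of the pad key: for each of the four keys $R^f$ ($f=0,\dots,3$) it draws independent uniform shares $R^f_1,\dots,R^f_k$, sets $R^f=\sum_{i=1}^k R^f_i\ (\mathrm{mod}\ d)$, and gives $V'_{t_i}$ only its own share $R^f_i$ (together with the original $V_{t_i}$). Then $V'_T$ reconstructs every key, so $\Delta^{V_T}$ is preserved; whereas if $T\not\subseteq J$ some share $R^f_i$ is missing from $V'_J$, hence $R^0,\dots,R^3$ are uniform given $V'_J$, hence $B'_1,E'_1,B'_2,E'_2$ are independent and uniform given $V'_J$, and the bracket in Eq.~(\ref{eq:delta-part}) cancels term by term. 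Two further mismatches are worth flagging. First, the paper uses a classical modular-shift pad rather than a quantum Pauli twirl; this is enabled by Lemma~\ref{Lemma:classical-distribution}, which you invoke only implicitly via ``a state $\sigma$'' but which is needed to reduce the witness to a classical distribution. Second, once the pad is applied, your case split ``$J\cap J_0^c\neq\varnothing$ versus $J\subsetneq J_0$'' is no longer the right one: your ``step one'' copies hold the \emph{unprimed} $B_1E_1B_2E_2$, which does not determine the \emph{primed} systems unless all shares are also present, so the clean dichotomy is the paper's ``$T\subset J$'' (primed systems determined) versus ``$T\not\subseteq J$'' (primed systems independent and uniform).
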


The proof of Theorem~\ref{thm:separation} uses the following two lemmas.

\begin{lemma}\label{Lemma:classical-distribution}
Let $\Pi^{V_J, (a_J,b_J)}$ be defined as above.  Then, if $\alpha^{V_J} \not \in \Pi^{V_J,(a_J,b_J)}$ there is a classical probability distribution $p$ on $B_1E_1B_2E_2V_J$ such that $\Delta^{V_J, (a_J,b_J)}(\alpha^{V_J},p) <0$.
\end{lemma}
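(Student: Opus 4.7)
The plan is to reduce Lemma~\ref{Lemma:classical-distribution} directly to the case-by-case analyses already carried out in Sections~\ref{sec:noaux} and \ref{sec:oneaux}. Treating the composite collection $V_J$ as a single auxiliary variable, the formula for $\Delta^{V_J,(a_J,b_J)}$ in Eq.~(\ref{eq:delta-part}) is formally identical to the one-auxiliary-variable expression $\Delta^{V,(a,b)}$ analyzed in Section~\ref{sec:oneaux}, and to $\Delta^\varnothing$ of Section~\ref{sec:noaux} when $V_J = \varnothing$. Thus $\Pi^{V_J,(a_J,b_J)}$ is nothing but the corresponding cone from those sections, and the task reduces to extracting a classical witness from the analyses already present.

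If $V_J = \varnothing$, then $\alpha^{V_J} \notin \Pi^{V_J,(a_J,b_J)}$ means at least one of the four facet inequalities in Eq.~(\ref{eq:zeroauxhrep}) is violated. The necessity argument of Section~\ref{sec:noaux} exhibits, for each such inequality, an explicit classical distribution on $B_1E_1B_2E_2$ built from independent uniform bits $R_1, R_2, R_3$ (and constants) that realizes the violated linear combination as the value of $\Delta^\varnothing$. The corresponding distribution $p$ is the required classical witness with $\Delta^{V_J,(a_J,b_J)}(\alpha^{V_J}, p) < 0$.

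If $V_J \neq \varnothing$, I would apply the same logic to the five inequivalent standard decouplings treated in Section~\ref{sec:oneaux}. For each case the facet-defining inequalities of $\Pi^{V_J,(a_J,b_J)}$, after imposing $\alpha_{V_J}+\alpha_{BV_J}+\alpha_{EV_J}+\alpha_{BEV_J}=0$ via Lemma~\ref{Lemma:SigmaZero}, are shown to be necessary by setting $B_1, E_1, B_2, E_2$ to independent uniform bits or constants and choosing $V_J$ either as a constant or as a perfectly correlated classical copy of some subset of $B_1B_2E_1E_2$. Since $\alpha^{V_J} \notin \Pi^{V_J,(a_J,b_J)}$ forces at least one such inequality to be violated, the corresponding classical distribution from Section~\ref{sec:oneaux} yields the desired witness $p$.

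The main delicate point is ensuring that every one of the $16$ possible standard decouplings $(a_J,b_J)$, not merely the five inequivalent representatives, admits classical witnesses for each of its facets. I would handle this by invoking only the $B \leftrightarrow E$ swap, which manifestly preserves classicality, to reduce to the seven cases of Table~\ref{tab:stddecouple}, and then using the explicit classical constructions of Section~\ref{sec:oneaux} together with the two trivial residual cases $(1,0)$ and $(0,0)$, which are handled analogously. In particular, the purification-based equivalence listed in the main text is never used to transport witnesses across decouplings; the classical witnesses are constructed directly for each decoupling, which sidesteps the risk that purification takes a classical distribution to a genuinely quantum state.
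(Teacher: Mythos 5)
Your approach is the same one the paper takes: the paper's entire proof of this lemma is a one-line reference back to Sections~\ref{sec:noaux} and \ref{sec:oneaux}, where the necessary conditions for each cone are established by exhibiting explicit classical distributions built from independent uniform bits, and you correctly reduce Lemma~\ref{Lemma:classical-distribution} to exactly that material by treating $V_J$ as a single composite auxiliary system.

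Your additional caution about the decouplings $(0,0)$ and $(1,0)$ is well-placed, and in fact sharper than the paper's own writeup. Table~\ref{tab:stddecouple} in the appendix lists seven inequivalent standard decouplings, but Section~\ref{sec:oneaux} only works out classical witnesses for five of them; the remaining two, $(1,0)$ and $(0,0)$, are folded into cases~1 and~2 in the main-text Figure~\ref{Fig:OneVarTable} via the purification symmetry $\alpha^V \mapsto (\alpha_{BEV},\alpha_{EV},\alpha_{BV},\alpha_V)$, which is not a classical symmetry and so cannot transport classical witnesses. You correctly flag that these cases must be done directly, using only the $B\leftrightarrow E$ swap and the $1\leftrightarrow 2$ channel-relabeling (both of which preserve classicality) to reduce to the seven table cases. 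The one residual softness in your writeup is that you dismiss $(1,0)$ and $(0,0)$ as "handled analogously" without writing them out. They are indeed short: for $(0,0)$ one has $\Delta^{V,(0,0)} = \alpha_{BV}I(B_1;B_2|V)+\alpha_{EV}I(E_1;E_2|V)+\alpha_{BEV}I(B_1E_1;B_2E_2|V)$ and the facet violations are witnessed with $V=0$ and various identifications of $B_i,E_j$ with a shared uniform bit, and for $(1,0)$ one has $\Delta^{V,(1,0)} = \alpha_{EV}\bigl[I(E_1;E_2|V)-I(B_1;E_2|V)\bigr]+\alpha_{BEV}I(E_1;B_2E_2|B_1V)$ with similar witnesses---but a complete proof should display them, since this is exactly the spot where the paper's own proof is thinner than it should be.
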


\begin{proof}
This is shown in Section~\ref{sec:noaux} and Section~\ref{sec:oneaux}.
\end{proof}

\begin{lemma} \label{Lemma:distribution-transform}
Fix a probability distribution $p$ on $V_1\dots V_n B_1E_1B_2E_2$ and a consistent standard decoupling $(a_1,b_1)...(a_n,b_n)$. Let $T \subseteq [n]$ be a fixed set and $(a_T,b_T)$ be the induced standard decoupling associated with the set of variables $V_T$. Then we can construct a probability distribution $p^\prime$ on ${V}_1^\prime\dots {V}_n^\prime B_1^\prime E_1^\prime B_2^\prime E_2^\prime$ such that
\begin{align}\label{eq:distri-tran}
\Delta^{(a_1,b_1)...(a_n,b_n)}(\alpha,p^\prime) = \Delta^{V_T,(a_T,b_T)}(\alpha^{V_T},p).
\end{align}
\end{lemma}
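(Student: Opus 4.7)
The plan is to construct $p'$ from $p$ by a single recipe that silences every $\Delta^{V_J,(a_J,b_J)}$ term in the decomposition (\ref{eq:delta-total}) except the $J=T$ term. Two mechanisms will be combined: for $i\notin T$ I will set $V_i'$ to be a deterministic copy of the whole tuple $(B_1',E_1',B_2',E_2')$, which wipes out every $J\not\subseteq T$ contribution; for $i\in T$ I will secret-share the randomness of a classical Pauli-type twirl of $B_1E_1B_2E_2$ across $\{V_i':i\in T\}$, which wipes out every $J\subsetneq T$ contribution while leaving the $J=T$ contribution equal to $\Delta^{V_T,(a_T,b_T)}(\alpha^{V_T},p)$.

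Concretely, I draw an independent uniform random offset $R=(R_{B_1},R_{E_1},R_{B_2},R_{E_2})$ matching the alphabets of $B_1,E_1,B_2,E_2$, and define the twirled systems $B_j':=B_j\oplus R_{B_j}$ and $E_j':=E_j\oplus R_{E_j}$. I XOR-secret-share the offset as $R=\bigoplus_{i\in T}R_i$, with the $R_i$ jointly uniform subject only to that constraint, so that any strict subcollection $\{R_i:i\in J\}$, $J\subsetneq T$, is itself uniform and leaves $R$ uniformly distributed conditional on it. Then I put
\[
V_i'=(V_i,R_i)\text{ for }i\in T,\qquad V_i'=(B_1',E_1',B_2',E_2')\text{ for }i\in T^c
\]
(when $T=\varnothing$ the twirl step is dropped, since there is no $V_i'$ to carry $R$ and no $J\subsetneq T$ to worry about).

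Verification of (\ref{eq:distri-tran}) proceeds by case analysis of each summand of (\ref{eq:delta-total}). When $J\cap T^c\neq\varnothing$, $V_J'$ contains some $V_i'$, $i\in T^c$, which is a perfect copy of $B_1'E_1'B_2'E_2'$, so every conditional entropy in (\ref{eq:delta-part}) is zero and the summand vanishes. When $J\subsetneq T$, conditioning on $V_J'=(V_J,\{R_i\}_{i\in J})$ leaves $R$ uniform and independent of $V_J$, so the conditional distribution of $(B_1',E_1',B_2',E_2')$ given $V_J'$ is the uniform product distribution; each bracketed expression in (\ref{eq:delta-part}) then telescopes as $[\log|s_1|+\log|N_2^J|]+[\log|N_1^J|+\log|s_2|]-[\log|s_1|+\log|s_2|]-[\log|N_1^J|+\log|N_2^J|]=0$, and the summand again vanishes. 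When $J=T$, the bits $(R_i)_{i\in T}$ inside $V_T'$ reconstruct $R$, so conditioning on $V_T'$ inverts the twirl, and the conditional entropies of subsets of $B_1'E_1'B_2'E_2'$ given $V_T'$ agree with those of the corresponding subsets of $B_1E_1B_2E_2$ given $V_T$ under $p$; hence the $J=T$ summand equals $\Delta^{V_T,(a_T,b_T)}(\alpha^{V_T},p)$.

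The only delicate point, and what I expect to be the main obstacle, is arranging the randomness so that $V_T'$ reconstructs $R$ (keeping the $J=T$ term intact) while every proper subvariable $V_J'$, $J\subsetneq T$, carries no information about $R$ (killing those terms). The classical XOR secret-sharing $R=\bigoplus_{i\in T}R_i$ reconciles these requirements in a single step and generalizes the Pauli-twirling construction of Lemma~\ref{lem:sep} from the single-variable setting to an arbitrary subset $T$ of the auxiliary variables.
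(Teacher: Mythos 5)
Your construction is essentially the same as the paper's: a classical additive twirl of $B_1E_1B_2E_2$ whose random shift is additively secret-shared across $\{V_i'\}_{i\in T}$, with $V_i'$ for $i\notin T$ set to a full copy of the $B/E$ tuple. The only cosmetic difference is that you put the \emph{twirled} tuple $B_1'E_1'B_2'E_2'$ into $V_i'$ for $i\in T^c$ (so $J\cap T^c\neq\varnothing$ kills the term by making the conditional entropies vanish) whereas the paper puts the \emph{untwirled} tuple there (so that case is handled together with $J\subsetneq T$ by uniformity of the residual shift); both variants give the same case-by-case conclusion, so the proof is correct and follows the paper's route.
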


\begin{proof}
If the systems $B_1,B_2,E_1,E_2$ do not have the same size, we extend them such that their sizes are the same. Denote $d=|B_1|=|B_2|=|E_1|=|E_2|$. We let $k = |T|$, the number of indices in $T$, and let $t_i$, $i=1,\dots,k$, be the $i$th element of $T$. For each $f=0,\dots,3$ and $i=1,\dots,k$, choose $R^f_i$ to be an independent uniformly distributed variable on $\{0,1,\ldots,d-1\}$. Let $R^f = \sum_{i=1}^k R^f_{i}\ (\mathrm{mod}\ d)$. To define $p^\prime$, we let
\begin{align}
B_1^\prime & = R^0 + B_1\ (\mathrm{mod}\ d) & E_1^\prime & = R^1 + E_1\ (\mathrm{mod}\ d)\\
B_2^\prime & = R^2 + B_2\ (\mathrm{mod}\ d) & E_2^\prime & = R^3 + E_2\ (\mathrm{mod}\ d).
\end{align}
For $i=1,\dots,k$, we let $V^\prime_{t_i} = V_{t_i}R^0_i R^1_i R^2_i R^3_i$, and for $r \not\in T$ we choose $V^\prime_{r} = B_1E_1B_2E_2$.

For any $X\in\cP(B_1E_1)$ and $Y\in\cP(B_2E_2)$, we let $X^\prime$ and $Y^\prime$ be the corresponding collections of systems from $B_1^\prime E_1^\prime$ and $B_2^\prime E_2^\prime$, respectively (i.e., if $X=B_1E_1$, then $X^\prime=B_1^\prime E_1^\prime$). Since $V_T^\prime$ includes $V_T$, as well as all the $R_i^f$ variables from which we know $R^0, R^1, R^2$ and $R^3$, we have
\[
  H(X^\prime Y^\prime|V_T^\prime)=H(X Y|V_T).
\]
This, combined with Eq.~(\ref{eq:delta-part}), gives
\begin{align}\label{eq:distri-tran-1}
\Delta^{V_T,(a_T,b_T)}(\alpha^{V_T},p^\prime) = \Delta^{V_T,(a_T,b_T)}(\alpha^{V_T},p).
\end{align}
Next, we show that
\begin{align}\label{eq:distri-tran-2}
\Delta^{V_J,(a_J,b_J)}(\alpha^{V_J},p^\prime) = 0, \quad\  \text{for all}\ J\neq T.
\end{align}
For this, we consider two cases. If $T\subset J$, then $B_1^\prime, E_1^\prime, B_2^\prime, E_2^\prime$ are all known given $V_J^\prime$, because $V_J^\prime$ includes $B_1E_1B_2E_2R^0R^1R^2R^3$. As a result,
\[
  H(X^\prime Y^\prime|V_J^\prime)=0.
\]
On the other hand, if $T\not\subseteq J$, There must exist $i$, such that none of $R_i^0, R_i^1, R_i^2, R_i^3$ is included in $V_J^\prime$. Thus given $V_J^\prime$, the variables $R^0, R^1, R^2, R^3$ are independent and uniformly distributed, and so are $B_1^\prime, E_1^\prime, B_2^\prime, E_2^\prime$. As a result,
\[
  H(X^\prime Y^\prime|V_J^\prime)=H(X^\prime)+H(Y^\prime).
\]
In both cases, using Eq.~(\ref{eq:delta-part}) we obtain Eq.~(\ref{eq:distri-tran-2}). At last, using Eq.~(\ref{eq:delta-total}) we easily see that Eq.~(\ref{eq:distri-tran-1}) and Eq.~(\ref{eq:distri-tran-2}) together lead to Eq.~(\ref{eq:distri-tran}).
\end{proof}

\medskip
\begin{proof}[Proof of Theorem \ref{thm:separation}]
It is obvious that Eq.~(\ref{eq:sub-cone}) implies Eq.~(\ref{eq:whole-cone}). For the other direction, we suppose that Eq.~(\ref{eq:sub-cone}) is not true: there is a subset $T\subseteq [n]$, such that $\alpha^{V_T}\not\in  \Pi^{V_T,(a_T,b_T)}$. Then by Lemma~\ref{Lemma:classical-distribution}, there is a probability distribution $p$ on $B_1E_1B_2E_2V_T$, satisfying
\[\Delta^{V_T, (a_T,b_T)}(\alpha^{V_T},p) <0.\]
Due to Lemma~\ref{Lemma:distribution-transform}, this further implies that we have probability distribution $p^\prime$ such that
\[\Delta^{(a_1,b_1)...(a_n,b_n)}(\alpha,p^\prime)<0,\]
which indicates that $\alpha \not\in \Pi^{(a_1,b_1)...(a_n,b_n)}$.
\end{proof}

\section{Non-standard Decouplings}
\label{sec:non-standard-decouplings}
The motivation of our consideration of standard decouplings comes from the experience in proving additivity of certain well-known quantities. However, a general treatment should consider all possible ways to generate the new auxiliary variables in the decoupling. In this section, we investigate the usefulness of non-standard decouplings. Interestingly, we find that all uniformly additive quantities $f_\alpha(U_\cN)$ derived from consistent decouplings that are non-standard (cf. definitions in Section~\ref{sec:General Considerations}), can be obtained by using standard decouplings. This proves that standard decouplings are really typical.

\begin{theorem}
  \label{thm:nonstandard-decoupling}
Let the linear entropy formula $f_\alpha(U_\cN,\phi_{V_1\dots V_nA})$ be bounded and uniformly subadditive with respect to a non-standard, consistent decoupling. Then there is $f_\beta(U_\cN,\varphi_{V_1\dots V_mA})$ defined on states with $m\leq n$ auxiliary variables, such that $f_\beta(U_\cN,\varphi_{V_1\dots V_mA})$ is uniformly subadditive with respect to a standard decoupling and
\[
  \max_{\phi_{V_1\dots V_nA}} f_\alpha(U_\cN,\phi_{V_1\dots V_nA})=\max_{\varphi_{V_1\dots V_mA}}
       f_\beta(U_\cN,\varphi_{V_1\dots V_mA}).
\]
\end{theorem}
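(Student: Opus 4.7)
My strategy is to show that any non-standard consistent decoupling can be coarsened to a standard decoupling on at most as many auxiliary variables, and to transfer $f_\alpha$ across this coarsening while preserving the maximum value.

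\textbf{Step 1 (combinatorics of $\cD$).} Given the non-standard consistent decoupling $\cD$ acting on $V_1,\dots,V_n$, define an equivalence relation on the auxiliary variables by declaring $V_j\sim V_{j'}$ whenever they co-occur in some $\tilde V_i$ or in some $\hat V_k$, and take the transitive closure. Let $G_1,\dots,G_m$ be the resulting classes, so $m\leq n$. Any auxiliary variable that appears in no $\tilde V_i$ and in no $\hat V_k$ will be shown (via an extension of Lemma~\ref{Lemma:SigmaZero} to consistent decouplings) to contribute nothing to the maximization and will be discarded; the remaining auxiliary variables are partitioned by $G_1,\dots,G_m$.

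\textbf{Step 2 (induced standard decoupling).} For each class set $W_\ell := \bigotimes_{j\in G_\ell} V_j$, and define
\[
\tilde W_\ell := \bigcup_{i:\tilde V_i\cap G_\ell\neq\varnothing}\tilde V_i,\qquad \hat W_\ell := \bigcup_{i:\hat V_i\cap G_\ell\neq\varnothing}\hat V_i.
\]
Disjointness of $\{\tilde V_i\}_i$ together with the definition of the equivalence relation forces $\tilde W_\ell = W_\ell\,\tilde M_2^{(\ell)}$ with $\tilde M_2^{(\ell)}\in\cP(B_2 E_2)$ and, symmetrically, $\hat W_\ell = W_\ell\,\hat M_1^{(\ell)}$ with $\hat M_1^{(\ell)}\in\cP(B_1 E_1)$. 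This is a standard decoupling $\cD'$ on the $m$-variable system in the sense of Section~\ref{sec:multiple-variables}.

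\textbf{Step 3 (construct $\beta$).} Define $\beta$ by $\beta_{s,t} := \alpha_{s,\,\bigsqcup_{\ell\in t}G_\ell}$ for $s\in\cP(BE)$ and $t\subseteq\{1,\dots,m\}$, and set all remaining $\alpha$-coefficients (those whose auxiliary subset splits some $G_\ell$ non-trivially) to zero in $\beta$. With this prescription, $f_\beta(U_{\cN},\varphi) = f_\alpha(U_{\cN},\phi)$ whenever $\phi$ is the ``unbundling'' of $\varphi$ (each $W_\ell$ viewed as the composite of the $V_j$ with $j\in G_\ell$), and moreover $\Delta^{\cD'}(\beta,\sigma)$ evaluated on any state on the $W$-system equals $\Delta^{\cD}(\alpha,\rho)$ on the corresponding lifted state on the $V$-system. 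Hence uniform subadditivity of $\beta$ with respect to $\cD'$ is inherited directly from uniform subadditivity of $\alpha$ with respect to $\cD$.

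\textbf{Step 4 (equal maxima).} The inequality $\max f_\beta \leq \max f_\alpha$ is immediate, since every $\varphi_{W_1\dots W_m A}$ lifts to a $\phi_{V_1\dots V_n A}$ on which $f_\beta = f_\alpha$ by Step 3. For the reverse $\max f_\alpha\leq\max f_\beta$, I must argue that the ``fine-grained'' coefficients which I zeroed out in Step 3 cannot aid the maximization. I would establish this by a generalization of Lemma~\ref{Lemma:SigmaZero} to the consistent setting: if any such coefficient contributed positively on some $\phi$, then taking many independent copies and applying uniform subadditivity with respect to $\cD$ would force $f_\alpha(U_{\cN})\to\infty$, contradicting boundedness. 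Combined with Step 3 this yields the required equality.

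\textbf{Main obstacle.} The technical heart of the argument is the reverse max inequality in Step 4: extending Lemma~\ref{Lemma:SigmaZero} to the non-standard consistent setting so as to kill fine-grained within-class dependence in $\alpha$. This is subtle because the tilde and hat sides of a non-standard decoupling need not match up registerwise with the LHS, so the simple product-state trick of Lemma~\ref{Lemma:SigmaZero} must be refined to exploit the equivalence-class structure simultaneously on both sides of the decoupling.
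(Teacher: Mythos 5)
Your equivalence-class coarsening strategy is genuinely different from what the paper does, but it has two concrete gaps that I do not see how to close without importing the paper's key technical tool.

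\textbf{Gap in Step 2.} The claim that $\tilde W_\ell = W_\ell\,\tilde M_2^{(\ell)}$ with $\tilde M_2^{(\ell)}\in\cP(B_2E_2)$ is false in general. The equivalence classes are built from co-occurrence on \emph{either} the tilde or hat side, but $\tilde W_\ell$ is assembled only from the tilde side. If $V_j$ with $j\in G_\ell$ appears in some $\hat V_k$ but in no $\tilde V_i$, then $\tilde W_\ell$ simply fails to contain $V_j$. Concretely, take $n=2$ with $\tilde V_1=V_1B_2$, $\tilde V_2=\varnothing$, $\hat V_1=V_1V_2B_1$, $\hat V_2=\varnothing$. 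Then $V_1\sim V_2$ via $\hat V_1$, so $G_1=\{1,2\}$, $W_1=V_1V_2$, but $\tilde W_1=V_1B_2\not\supseteq W_1$. So $\cD'$ is not a standard decoupling on the bundled variables, and the whole construction does not get off the ground.

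\textbf{Gap in Step 3.} Even when the bundling is well defined (e.g.\ when $m=n$ and no classes merge), the claim that $\Delta^{\cD'}(\beta,\sigma)=\Delta^{\cD}(\alpha,\rho)$ on lifted states is not correct, because the non-standard $\cD$ may permute which $V_j$ lands in which output slot. Take $n=2$ with $\tilde V_1=V_2B_2$, $\tilde V_2=V_1E_2$. Your $\cD'$ has $\tilde W_1=V_1E_2$, $\tilde W_2=V_2B_2$, so $\Delta^{\cD'}$ picks up a term $\alpha_{s,\{1\}}H(sV_1E_2)$ whereas $\Delta^{\cD}$ has $\alpha_{s,\{1\}}H(sV_2B_2)$; these differ whenever $\alpha_{s,\{1\}}\neq\alpha_{s,\{2\}}$. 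Hence uniform subadditivity does \emph{not} transfer by mere relabeling. What you need here is a monotonicity statement across the relocation rules induced by $\cD$ — this is exactly the paper's Lemma~\ref{lemma:nonstandard}, proved by combining the bounded-implies-$\eta_t\leq 0$ estimate from Lemma~\ref{Lemma:SigmaZero} with the subadditivity hypothesis evaluated on product states of the form $\rho_{V_1\dots V_nB_1E_1}\otimes\rho_{B_2E_2}$. You have correctly intuited that an extension of Lemma~\ref{Lemma:SigmaZero} is the crux (your Step~4), but without Lemma~\ref{lemma:nonstandard} neither the subadditivity transfer of Step~3 nor the reverse max inequality of Step~4 can be established.

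For comparison, the paper's proof does not bundle by equivalence classes; it first proves Lemma~\ref{lemma:nonstandard} ($f_\alpha(\rho_{V_1\dots V_nBE})\leq f_\alpha(\rho_{V_1'\dots V_n'BE})$ and likewise for the double-prime relocation), then sets up a two-case induction: when all $V_i'$ and $V_i''$ are non-empty, the relocation rules are permutations and iterating them to the identity shows that $f_\alpha$ \emph{itself} is uniformly subadditive with respect to a standard decoupling; when some $V_i'$ or $V_i''$ is empty, one sets those auxiliary variables to a pure state, passes to an $f_\gamma$ on fewer variables with the same maximum, checks that $f_\gamma$ is uniformly subadditive with respect to an induced consistent decoupling, and recurses. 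This inductive scheme handles exactly the two obstructions your proposal runs into.
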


Theorem~\ref{thm:nonstandard-decoupling} guarantees that there is no need to find out the uniformly subadditive entropy formulas $f_\alpha(U_\cN,\phi_{V_1\dots V_n A})$ under non-standard consistent decouplings. This is because our interest is in searching for uniformly additive quantities $f_\alpha(U_\cN):=\max_\phi f_\alpha(U_\cN,\phi_{V_1\dots V_n A})$, other than in the entropy formulas themselves. For this purpose, Theorem~\ref{thm:nonstandard-decoupling} shows that our consideration of standard decouplings suffices.

Before going to the proof, we specify some of the notations. Since the linear entropy formula $f_\alpha(U_\cN,\phi_{V_1\dots V_nA})$ is defined with respect to the state $\rho_{V_1\dots V_nBE}=(I \otimes U_{\cN})\phi_{V_1\dots V_nA}(I \otimes U_{\cN}^\dagger)$, we also denote $f_\alpha(U_\cN,\phi_{V_1\dots V_nA})$ as
\begin{equation}
  \label{eq:nonstandard-1}
  f_\alpha(\rho_{V_1\dots V_nBE}):=\sum_{t\in\cP(V_1\dots V_n)}\sum_{s\in\cP(BE)}\alpha_{s,t}H(st)_\rho.
\end{equation}
When non-standard decouplings are considered, we may encounter the situation that some of the auxiliary variables are empties. Let the state $\sigma_{V_1\dots V_nBE}$ have empty auxiliary variables, say, we suppose $V_n=\varnothing$. Then $f_\alpha(\sigma_{V_1\dots V_nBE})$ is evaluated according to Eq.~(\ref{eq:nonstandard-1}) by letting $H(V_n)_\sigma=0$ and $H(MV_n)_\sigma=H(M)_\sigma$ for any $M\in\cP(V_1\dots V_{n-1}BE)$. Such a state $\sigma_{V_1\dots V_nBE}$ with $V_n=\varnothing$ is not artifical: we can identify it in a natural way with $\sigma_{V_1\dots V_{n-1}BE}\otimes\proj{0}_{V_n}$, that is, empty variables are actually each in a pure state and are hence isolated from the other ones.

Let $\cD: \rho_{V_1\dots V_n B_1 B_2 E_1E_2} \rightarrow (\rho_{\tilde{V}_1\dots \tilde{V}_n B_1E_1},\rho_{\hat{V}_1\dots \hat{V}_n B_2E_2})$ be the non-standard decoupling in the assumption of Theorem~\ref{thm:nonstandard-decoupling}. It is determined by a grouping and relabeling of the systems $V_1,\dots, V_n, B_2, E_2$ to form $\tilde{V}_1,\dots, \tilde{V}_n$,
and another grouping and relabeling of the systems $V_1,\dots, V_n, B_1, E_1$ to form $\hat{V}_1,\dots, \hat{V}_n$. That is, $\tilde{V}_i \in \cP(V_1\dots V_n B_2E_2)$ and  $\hat{V}_i \in \cP(V_1\dots V_n B_1E_1)$, and as a consistence condition we require $\tilde{V}_i \cap \tilde{V}_j = \hat{V}_i \cap \hat{V}_j = \varnothing$. We further write $\tilde{V}_i$ and $\hat{V}_i$ as the joint of the ``$V$'' part and the ``$BE$'' part: $\tilde{V}_i=V_i^\prime N_2^i$ with $V_i^\prime\in\cP(V_1\dots V_n)$ and $N_2^i \in \cP(B_2E_2)$, $\hat{V}_i=V_i^{\prime\prime} N_1^i$ with $V_i^{\prime\prime}\in\cP(V_1\dots V_n)$ and $N_1^i \in \cP(B_1E_1)$. In this section, the notations $\tilde{V}_i,\hat{V}_i,V_i^\prime,V_i^{\prime\prime},N_1^i,N_2^i$ with $i=1,\dots,n$ are all reserved to denote the fixed sets of variables given by the decoupling $\cD$, as described above.

\begin{definition}
  \label{def:relocation-rule}
  Given the sets $T_1,\dots,T_n\in [n]$ such that $T_i\cap T_j = \varnothing$ for $1\leq i \neq j \leq n$, we define a relocation rule $g$ of the variables $W_1,\dots, W_n$, via
  \[
    g(W_1,\dots, W_n):=(W_{T_1},\dots, W_{T_n}),
  \]
  where $W_{T_i}$ is a collection of the systems $W_j$ such that $j\in T_i$.
\end{definition}

According to Definition~\ref{def:relocation-rule}, we now define two relocation rules $g_1$ and $g_2$, which are associated with the decoupling $\cD$ and satisfy
\begin{align*}
g_1(V_1,\dots,V_n)&=(V_1^\prime,\dots,V_n^\prime), \\
g_2(V_1,\dots,V_n)&=(V_1^{\prime\prime},\dots,V_n^{\prime\prime}).
\end{align*}
That is, $g_1$ is given by the sets $T_i:=\{j|\ 1\leq j\leq n, V_j\in V_i^\prime\}$ with $i=1,\dots,n$, and $g_2$ is given by the sets $S_i:=\{j|\ 1\leq j\leq n, V_j\in V_i^{\prime\prime}\}$ with $i=1,\dots,n$.

The following lemma will be very useful. Note that in Eqs.~(\ref{eq:lemma-nonstandard-1}) and~(\ref{eq:lemma-nonstandard-2}), $V_i^\prime$ and $V_i^{\prime\prime}$ are actually collections of the variables $V_1,\dots,V_n$, formulated by the relocation rules $g_1$ and $g_2$. So in later applications of Lemma~\ref{lemma:nonstandard}, we may also use $g_1$ and $g_2$ to specify the relations between the auxiliary variables.
\begin{lemma}
  \label{lemma:nonstandard}
  Under the same assumption of Theorem~\ref{thm:nonstandard-decoupling} and using the notations described above, we have for any state $\rho_{V_1\dots V_nBE}$,
  \begin{align}
    f_\alpha(\rho_{V_1\dots V_nBE}) &\leq f_\alpha(\rho_{V_1^\prime\dots V_n^\prime BE}),  \label{eq:lemma-nonstandard-1} \\
    f_\alpha(\rho_{V_1\dots V_nBE}) &\leq f_\alpha(\rho_{V_1^{\prime\prime}\dots V_n^{\prime\prime}
                                                                                      BE}).\label{eq:lemma-nonstandard-2}
  \end{align}
\end{lemma}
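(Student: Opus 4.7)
My plan is to apply the uniform subadditivity of $f_\alpha$ with respect to the given non-standard decoupling $\cD$ to a carefully chosen two-use state. For the first inequality, I will relabel the $B,E$ systems of $\rho_{V_1\dots V_nBE}$ as $B_1,E_1$ and form
\[
  \sigma_{V_1\dots V_nB_1E_1B_2E_2} \;=\; \rho_{V_1\dots V_nB_1E_1}\otimes\proj{0}_{B_2E_2}.
\]
Since every state on $V_1\dots V_nB_1E_1B_2E_2$ can be generated by an appropriate $\phi_{V_1\dots V_nA_1A_2}$ together with channels $\cN_1,\cN_2$ (as noted after Eq.~(\ref{eq:state-gen})), uniform subadditivity gives
\[
 f_\alpha(\sigma) \;\leq\; f_\alpha(\sigma_{\tilde V_1\dots\tilde V_nB_1E_1}) \;+\; f_\alpha(\sigma_{\hat V_1\dots\hat V_nB_2E_2}).
\]

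I will then evaluate each term. Because $B_2E_2$ is pure and uncorrelated, any entropy of a subset of $\sigma$ containing $B_2$ or $E_2$ equals the entropy of the same subset with $B_2,E_2$ deleted, so $f_\alpha(\sigma)=f_\alpha(\rho_{V_1\dots V_nBE})$. For the first RHS term, each $\tilde V_i=V_i' N_2^i$ has its $N_2^i\subseteq B_2E_2$ part in a pure product state, which contributes nothing to any entropy; stripping these out yields $f_\alpha(\sigma_{\tilde V_1\dots\tilde V_nB_1E_1})=f_\alpha(\rho_{V_1'\dots V_n'BE})$, with any empty $V_i'$ handled by the convention quoted in the excerpt. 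For the second RHS term, $B_2E_2$ is pure product with $\hat V_1\dots\hat V_n$, so every $\alpha_{s,t}$ with $s\in\cP(BE)$ collapses onto the same entropy of the $\hat V_J$ piece, giving
\[
 f_\alpha(\sigma_{\hat V_1\dots\hat V_nB_2E_2}) \;=\; \sum_{J\subseteq\{1,\dots,n\}}\eta_J\,H(\hat V_J,\sigma),\qquad \eta_J:=\sum_{s\in\cP(BE)}\alpha_{s,V_J}.
\]

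The crux is to show this last sum is $\leq 0$. Boundedness of $f_\alpha$, via the tensor-power computation used inside the proof of Lemma~\ref{Lemma:SigmaZero} to establish Eq.~(\ref{Eq:SumLEQ}), forces $\sum_J\eta_J\,H(V_J,\tau)\leq 0$ for every state $\tau$ on $V_1\dots V_n$ of arbitrary dimension. Since the set of achievable quantum entropy vectors on $n$ disjoint systems depends only on $n$ and not on the labels, dimensions, or ambient embedding of those systems, the same inequality $\sum_J\eta_J\,H(X_J,\tau)\leq 0$ persists whenever $X_1,\dots,X_n$ are any $n$ disjoint subsystems of any multipartite state. Applying this with $X_i=\hat V_i$ (which are disjoint by consistency of $\cD$) and $\tau=\sigma_{\hat V_1\dots\hat V_n}$ makes the second RHS term non-positive, and combining the three evaluations produces Eq.~(\ref{eq:lemma-nonstandard-1}).

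The second inequality, Eq.~(\ref{eq:lemma-nonstandard-2}), is then obtained by the dual construction $\sigma'=\proj{0}_{B_1E_1}\otimes\rho_{V_1\dots V_nB_2E_2}$: now the first RHS term in subadditivity is non-positive by the same entropy-cone argument and the second evaluates to $f_\alpha(\rho_{V_1''\dots V_n''BE})$. The main obstacle I anticipate is the passage from the boundedness inequality on the \emph{original} auxiliary systems to the same inequality on the hybrid systems $\hat V_i=V_i'' N_1^i$ (respectively $\tilde V_i=V_i' N_2^i$), which mix original $V$'s with pieces of $B_1E_1$ or $B_2E_2$; this requires the label-invariance of the quantum entropy cone to be invoked carefully.
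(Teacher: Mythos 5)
Your proposal is correct and follows essentially the same route as the paper's proof: apply uniform subadditivity to a product test state $\rho_{V_1\dots V_nB_1E_1}\otimes\rho_{B_2E_2}$, evaluate all three terms using the product structure, and invoke the boundedness consequence (Eq.~(\ref{eq:lemma-nonstandard-3}), i.e.\ Eq.~(\ref{Eq:SumLEQ})) to discard the residual $\hat{V}$-term; the only difference is that you specialize $\rho_{B_2E_2}$ to $\proj{0}$, which makes the $\tilde{t}/t'$ contribution vanish outright rather than needing to be bounded, a modest simplification of the same argument.
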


\begin{proof}
At first, it has been shown in Lemma~\ref{Lemma:SigmaZero} (Eq.~(\ref{Eq:SumLEQ})) that $f_\alpha(U_\cN,\phi_{V_1\dots V_nA})$ being bounded implies that
\begin{align}
  \label{eq:lemma-nonstandard-3}
  f_\alpha(\rho_{V_1\dots V_n}\otimes\proj{00}_{BE})=\sum_{t\in\cP(V_1\dots V_n)}\sum_{s\in\cP(BE)}
                                                      \alpha_{s,t}H(t)_{\rho_{V_1\dots V_n}} \leq 0
\end{align}
for any state $\rho_{V_1\dots V_n}$. Now since $f_\alpha(U_\cN,\phi_{V_1\dots V_nA})$ is uniformly subadditive with respect to the decoupling $\cD$, we have
\begin{align}
  \label{eq:lemma-nonstandard-4}
  \Delta(\alpha,\rho_{V_1\dots V_nB_1B_2E_1E_2})=f_\alpha(\rho_{\tilde{V}_1\dots \tilde{V}_n B_1E_1})
           +f_\alpha(\rho_{\hat{V}_1\dots \hat{V}_n B_2E_2})-f_\alpha(\rho_{V_1\dots V_n B_1 B_2 E_1E_2})\geq 0
\end{align}
for any state $\rho_{V_1\dots V_n B_1 B_2 E_1E_2}$. Considering a state of the form $\rho_{V_1\dots V_n B_1 E_1}\otimes\rho_{B_2 E_2}$, we derive from Eq.~(\ref{eq:lemma-nonstandard-4}) that
\begin{equation}\begin{split}
  \label{eq:lemma-nonstandard-5}
  \Delta(\alpha,\rho_{V_1\dots V_n B_1 E_1}\otimes\rho_{B_2 E_2})
  = &\sum_{s,t}\alpha_{s,t}\big(H(s_1\tilde{t})+H(s_2\hat{t})-H(s_1s_2t)\big)  \\
  = &\sum_{s,t}\alpha_{s,t}\big(H(s_1t^\prime)+H(\tilde{t}/t^\prime)+H(s_2)+H(\hat{t})-H(s_2)-H(s_1t)\big)  \\
  =&\sum_{s,t}\alpha_{s,t}\big(H(s_1t^\prime)-H(s_1t)\big)+\sum_{s,t}\alpha_{s,t}\big(H(\tilde{t}/t^\prime)+H(\hat{t})\big)\\
  \geq &0,
\end{split}\end{equation}
where the sums are over all subsets $s\in\cP(BE)$ and $t\in\cP(V_1\dots V_n)$, and the notation $\tilde{t}/t^\prime$ indicates the collection of variables resulting from removing $t^\prime$ from $\tilde{t}$. Eq.~(\ref{eq:lemma-nonstandard-3}) gives
\[
  \sum_{s,t}\alpha_{s,t} H(\tilde{t}/t^\prime)\leq 0  \quad\text{and}\quad  \sum_{s,t}\alpha_{s,t} H(\hat{t}) \leq 0.
\]
Combining this with Eq.~(\ref{eq:lemma-nonstandard-5}) we conclude that for any state $\rho_{V_1\dots V_n B_1 E_1}$,
\[
  \sum_{s,t}\alpha_{s,t}\big(H(s_1t^\prime)-H(s_1t)\big)\geq 0,
\]
which proves Eq.~(\ref{eq:lemma-nonstandard-1}). Since Eq.~(\ref{eq:lemma-nonstandard-2}) can be proved in the same way we have finished the proof.
\end{proof}

Now we are ready for the proof of Theorem~\ref{thm:nonstandard-decoupling}. We will not construct an explicit expression for $f_\beta(U_\cN,\varphi_{V_1\dots V_mA})$. Instead, we prove the existence.

\begin{proof}[Proof of Theorem \ref{thm:nonstandard-decoupling}]
We will use mathematical induction. Let us consider the following two cases.

\textbf{Case 1:} $V_i^\prime \neq \varnothing$ and $V_i^{\prime\prime} \neq \varnothing$ for all $1\leq i\leq n$. In this case, $V_1^\prime,\dots, V_n^\prime$ and $V_1^{\prime\prime},\dots, V_n^{\prime\prime}$ are respectively permutations of $V_1,\dots, V_n$: there are permutations $\pi,\tau \in S_n$ such that $V_i^\prime=V_{\pi(i)}$ and $V_i^{\prime\prime}=V_{\tau(i)}$ for all $1\leq i\leq n$. Denote the order of $\pi$ and $\tau$ as $a$ and $b$, respectively. That is
\[
  \pi^a=\tau^b=I,
\]
where $I$ is the identity of the symmetric group $S_n$. Now define
 \begin{align}
   \left(\tilde{V}_1^{(a-1)},\dots, \tilde{V}_n^{(a-1)}\right)&:=g_1^{a-1}\left(\tilde{V}_1,\dots,\tilde{V}_n\right),  \label{eq:nonstandard-proof-1} \\
   \left(\hat{V}_1^{(b-1)},\dots, \hat{V}_n^{(b-1)}\right)&:=g_2^{b-1}\left(\hat{V}_1,\dots,\hat{V}_n\right).
   \label{eq:nonstandard-proof-2}
\end{align}
Then
\begin{align}
  \tilde{V}_i^{(a-1)}&=\tilde{V}_{\pi^{a-1}(i)}=V^\prime_{\pi^{a-1}(i)}N_2^{\pi^{a-1}(i)}=V_{\pi^{a}(i)}N_2^{\pi^{a-1}(i)}
                                              =V_iN_2^{\pi^{a-1}(i)},  \label{eq:nonstandard-proof-3} \\
  \hat{V}_i^{(b-1)}&=\hat{V}_{\tau^{b-1}(i)}=V^{\prime\prime}_{\tau^{b-1}(i)}N_1^{\tau^{b-1}(i)}=V_{\tau^{b}(i)}
                                    N_1^{\tau^{b-1}(i)}=V_iN_1^{\tau^{b-1}(i)}.  \label{eq:nonstandard-proof-4}
\end{align}
To proceed, for any state $\rho_{V_1\dots V_nB_1B_2E_1E_2}$, we have
\begin{equation}\begin{split}
  \label{eq:nonstandard-proof-5}
  f_\alpha(\rho_{V_1\dots V_nB_1B_2E_1E_2}) &\leq f_\alpha(\rho_{\tilde{V}_1\dots \tilde{V}_nB_1E_1})
                                                               +f_\alpha(\rho_{\hat{V}_1\dots \hat{V}_nB_2E_2})  \\
  &\leq f_\alpha\big(\rho_{\tilde{V}_1^{(a-1)}\dots \tilde{V}_n^{(a-1)}B_1E_1}\big)
                                      +f_\alpha\big(\rho_{\hat{V}_1^{(b-1)}\dots \hat{V}_n^{(b-1)}B_2E_2}\big),
\end{split}\end{equation}
where the first inequality is by assumption, and for the second inequality we have applied Lemma~\ref{lemma:nonstandard} iteratively and used the notations defined in Eqs.~(\ref{eq:nonstandard-proof-1}) and (\ref{eq:nonstandard-proof-2}). Eq.~(\ref{eq:nonstandard-proof-5}) shows that $f_\alpha(\rho_{V_1\dots V_nBE})$ itself is uniformly subadditive with respect to a \emph{standard} decoupling given by Eqs.~(\ref{eq:nonstandard-proof-3}) and~(\ref{eq:nonstandard-proof-4}).

\textbf{Case 2:} at least one of $V_i^\prime$ for $i=1,\dots,n$ or one of $V_i^{\prime\prime}$ for $i=1,\dots,n$ is $\varnothing$. Without loss of generality, we suppose $V_i^\prime=\varnothing$ for some values of $i$, and further suppose that all the empty variables are in the end. So there is $k<n$, such that $V_1^\prime\dots V_n^\prime=V_1^\prime\dots V_k^\prime\varnothing\dots\varnothing$ (i.e., $V_i^\prime=\varnothing$ for $i=k+1,\dots,n$). Note that it is possible that $k=0$. Now Eq.~(\ref{eq:lemma-nonstandard-1}) of Lemma~\ref{lemma:nonstandard} translates to
\begin{equation}
  \label{eq:nonstandard-proof-6}
  f_\alpha(\rho_{V_1\dots V_nBE}) \leq f_\alpha\big(\rho_{V_1^\prime\dots V_k^\prime BE}
             \otimes\proj{0}_{V_{k+1}^\prime}\otimes\dots\otimes\proj{0}_{V_n^\prime}\big).
\end{equation}
Define a linear entropy formula $f_\gamma(\rho_{V_1\dots V_kBE})$ on states with $k$ auxiliary variables, as
\begin{equation}
  \label{eq:nonstandard-proof-7}
  f_\gamma(\rho_{V_1\dots V_kBE}):= f_\alpha\big(\rho_{V_1\dots V_k BE}
             \otimes\proj{0}_{V_{k+1}}\otimes\dots\otimes\proj{0}_{V_n}\big).
\end{equation}
We now claim:
\begin{itemize}
  \item[(A)] It holds that
    \[\max_{\phi_{V_1\dots V_nA}} f_\alpha\big(U_\cN(\phi_{V_1\dots V_nA})\big)
    = \max_{\varphi_{V_1\dots V_kA}} f_\gamma\big(U_\cN(\varphi_{V_1\dots V_kA})\big).\]
    In particular, this equality implies that $f_\gamma\big(U_\cN,\varphi_{V_1\dots V_kA}\big)$ is also bounded.
  \item[(B)] $f_\gamma\big(\rho_{V_1\dots V_kBE}\big)$ is uniformly subadditive with respect to a \emph{consistent} decoupling.
\end{itemize}
Claim (A) is easy to see. The ``$\leq$'' part follows from Eq.~(\ref{eq:nonstandard-proof-6}), and the ``$\geq$'' part is obvious by the definition Eq.~(\ref{eq:nonstandard-proof-7}). To verify claim (B), for any state $\rho_{V_1\dots V_kB_1B_2E_1E_2}$ we have
\begin{equation}\begin{split}
  \label{eq:nonstandard-proof-8}
  f_\gamma(\rho_{V_1\dots V_kB_1B_2E_1E_2})
  =    &f_\alpha\big(\rho_{V_1\dots V_k B_1B_2E_1E_2}\otimes\proj{0}_{V_{k+1}}\otimes\dots\otimes\proj{0}_{V_n}\big)\\
  \leq &f_\alpha(\rho_{\tilde{V}_1\dots \tilde{V}_nB_1E_1})+f_\alpha(\rho_{\hat{V}_1\dots \hat{V}_nB_2E_2})   \\
  \leq &f_\alpha\big(\rho_{\tilde{V}_1^\prime\dots \tilde{V}_k^\prime B_1E_1}\!\otimes\!\proj{0}_{\tilde{V}_{k+1}^\prime}
        \!\!\!\otimes\!\dots\!\otimes\!\proj{0}_{\tilde{V}_n^\prime}\big) + f_\alpha\big(\rho_{\hat{V}_1^\prime\dots \hat{V}_k^\prime B_2E_2}\!\otimes\!\proj{0}_{\hat{V}_{k+1}^\prime}\!\!\!\otimes\!\dots\!\otimes\!
                                                                              \proj{0}_{\hat{V}_n^\prime}\big)  \\
  =    &f_\gamma\big(\rho_{\tilde{V}_1^\prime\dots \tilde{V}_k^\prime B_1E_1}\big)
            + f_\gamma\big(\rho_{\hat{V}_1^\prime\dots \hat{V}_k^\prime B_2E_2}\big),
\end{split}\end{equation}
where we have defined $(\tilde{V}_1^\prime,\dots, \tilde{V}_n^\prime):=g_1(\tilde{V}_1,\dots, \tilde{V}_n)$ and $(\hat{V}_1^\prime,\dots, \hat{V}_n^\prime):=g_1(\hat{V}_1,\dots, \hat{V}_n)$, and since the second line, we have set $V_{k+1}=\dots=V_n=\varnothing$. In Eq.~(\ref{eq:nonstandard-proof-8}), the first line is by definition~(\ref{eq:nonstandard-proof-7}), the second line is by assumption that $f_\alpha$ is uniformly subadditive with respect to the decoupling $\cD$, the third line is by Lemma~\ref{lemma:nonstandard} (in the form of Eq.~(\ref{eq:nonstandard-proof-6})), and the last line is again by definition~(\ref{eq:nonstandard-proof-7}). We can check that $\tilde{V}_i^\prime\in\cP(V_1\dots V_kB_2E_2)$ and $\hat{V}_i^\prime\in\cP(V_1\dots V_kB_1E_1)$  for $1\leq i\leq k$,
and also $\tilde{V}_i^\prime\cap\tilde{V}_j^\prime=\varnothing$ and $\hat{V}_i^\prime\cap\hat{V}_j^\prime=\varnothing$ for $i\neq j$. Thus
\begin{align}
  \label{eq:nonstandard-proof-9}
  \rho_{V_1\dots V_kB_1B_2E_1E_2}\rightarrow \big(\rho_{\tilde{V}_1^\prime\dots \tilde{V}_k^\prime B_1E_1},
                                                       \rho_{\hat{V}_1^\prime\dots \hat{V}_k^\prime B_2E_2} \big)
\end{align}
is a consistent decoupling and Eq.~(\ref{eq:nonstandard-proof-8}) indeed verifies that $f_\gamma\big(\rho_{V_1\dots V_kBE}\big)$ is uniformly subadditive with respect to this decoupling.

At last, we argue that the above considerations of Case 1 and Case 2 suffice to conclude the proof, by applying the method of mathematical induction, of which the basis here is the fact that with zero auxiliary variable, the unique consistent decoupling $\rho_{B_1B_2E_1E_2}\rightarrow (\rho_{B_1E_1}, \rho_{B_2E_2})$ is standard. Note that the proofs for the two claims in Case 2 work as well when $k=0$ and in this case the decoupling~(\ref{eq:nonstandard-proof-9}) reduces to the standard decoupling $\rho_{B_1B_2E_1E_2}\rightarrow (\rho_{B_1E_1}, \rho_{B_2E_2})$.
\end{proof}

\section{Entropic Criterion for C-Q Structure of Quantum State}
\label{sec:c-q-structure}
\begin{lemma}
  \label{lemma:entropy-cq-state}
  For a quantum state $\rho_{R_1R_2R_3A}$, suppose the conditional entropies satisfy
  $H(R_i|R_j)=0$ and $H(R_i|R_jR_k)=0$ for all $i,j,k\in\{1,2,3\}$. Then the reduced state
  $\rho_{R_iA}$ is classical-quantum, e.g., $\rho_{R_1A}$  can be written as
  \begin{equation}
    \label{eq:lemma-eccq-1}
    \rho_{R_1A}=\sum_xp_x\proj{x}^{R_1}\otimes\rho_x^A,
  \end{equation}
  with $\{\ket{x}\}$ a set of orthogonal states.
\end{lemma}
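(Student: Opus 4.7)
My plan is to use the vanishing pairwise conditional entropies to force a perfectly correlated classical-classical form on every marginal $\rho_{R_iR_j}$, and then to propagate this structure to $A$ by taking a purification of the full state and extracting $\rho_{R_1A}$ from it.

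The first step is to invoke the standard structure theorem for saturation of the Araki-Lieb inequality (equivalently, the vanishing conditional entropy characterization): for any bipartite state $\rho_{XY}$, the equality $H(X|Y)=0$ is equivalent to the existence of an orthonormal basis $\{|x\rangle_X\}$ and states $\{\sigma_x^Y\}$ with mutually orthogonal supports such that
\[
  \rho_{XY}=\sum_x p_x\,|x\rangle\langle x|_X\otimes \sigma_x^Y .
\]
Applying this to each pair $(R_i,R_j)$ and combining with the symmetric identity $H(R_j|R_i)=0$ forces the $\sigma_x^{R_j}$ to be pure and to live in one-dimensional pairwise orthogonal subspaces; after a unitary relabeling on $R_j$, this yields
\[
  \rho_{R_iR_j} = \sum_x p_x\,|x\rangle\langle x|_{R_i}\otimes|x\rangle\langle x|_{R_j}.
\]
Hence each pair of registers is a perfectly correlated classical state with a common distribution $\{p_x\}$.

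Next, I would take a purification $|\Psi\rangle_{R_1R_2R_3AE}$ of $\rho_{R_1R_2R_3A}$ on some auxiliary system $E$. The eigendecomposition of the marginal $\rho_{R_1R_2}=\sum_x p_x\,|xx\rangle\langle xx|_{R_1R_2}$ supplies the Schmidt basis on the $R_1R_2$ side of the bipartite cut $R_1R_2\,|\,R_3AE$, so
\[
  |\Psi\rangle = \sum_x \sqrt{p_x}\,|x\rangle_{R_1}|x\rangle_{R_2}\,|\phi_x\rangle_{R_3AE}
\]
for some orthonormal family $\{|\phi_x\rangle_{R_3AE}\}$. Forming $|\Psi\rangle\langle\Psi|$ and tracing out $R_2$, $R_3$, and $E$, the inner products $\langle y|x\rangle_{R_2}=\delta_{xy}$ annihilate all off-diagonal terms on $R_1$, and I obtain
\[
  \rho_{R_1A} = \sum_x p_x\,|x\rangle\langle x|_{R_1}\otimes \rho_x^A,\qquad \rho_x^A := \mathrm{Tr}_{R_3E}\,|\phi_x\rangle\langle\phi_x|_{R_3AE},
\]
which is exactly the classical-quantum form in Eq.~(\ref{eq:lemma-eccq-1}). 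The analogous conclusions for $\rho_{R_2A}$ and $\rho_{R_3A}$ follow by relabeling the registers.

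The only nontrivial step is the structure theorem in Step 1, which I would prefer to quote as standard rather than reprove; given that input, the rest reduces to a Schmidt decomposition plus a partial trace. It is worth noting that the hypotheses $H(R_i|R_jR_k)=0$ never actually enter this argument: the pairwise conditions alone are already enough, and the three-party conditions are automatically consistent with (and follow from) the perfect classical correlation established in Step 1.
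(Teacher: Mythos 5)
Your Step~1 invokes a structure theorem that is false. Vanishing conditional entropy $H(X|Y)=0$ is implied by, but is not equivalent to, the form $\rho_{XY}=\sum_x p_x\proj{x}_X\otimes\sigma_x^Y$ with orthogonal $\{\sigma_x^Y\}$. Worse, even the symmetric pair $H(X|Y)=H(Y|X)=0$ does not force the perfectly correlated classical form you want. The two-party marginal of the W state $|W\rangle=\frac{1}{\sqrt 3}(|100\rangle+|010\rangle+|001\rangle)$ is an explicit counterexample: with $|a\rangle=\frac{1}{\sqrt 2}(|10\rangle+|01\rangle)$, one has $\rho_{XY}=\frac{2}{3}\proj{a}+\frac{1}{3}\proj{00}$, so $H(X)=H(Y)=H(XY)=h(1/3)$ and hence $H(X|Y)=H(Y|X)=0$; yet $\rho_{XY}$ has the entangled eigenvector $|a\rangle$ in its non-degenerate eigenspace (and a negative partial transpose), so it is not of the form $\sum_x p_x\proj{x}_X\otimes\proj{x}_Y$ in any basis. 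This also refutes your closing remark that the hypotheses $H(R_i|R_jR_k)=0$ never enter: for the W state $H(X|YZ)=-H(YZ)<0$, so the three-party condition fails, and with it the conclusion — these conditions are doing real work.

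The paper's proof handles Step~1 correctly and differently. It combines $H(R_1|R_3)=0$ with $H(R_1|R_2R_3)=0$ to conclude $I(R_1;R_2|R_3)=0$, invokes the strong-subadditivity saturation structure theorem (the cited result [SSA-eq]) to obtain separability of $\rho_{R_1R_2}$, and only then uses the vanishing pairwise conditional entropies, via a Petz-recovery argument applied to the flagged ensemble $\rho_{XR_1R_2}$, to show that the ensemble on each side consists of pure, mutually orthogonal states. Your Step~2 — purifying, reading off the Schmidt form across the cut $R_1R_2\mid R_3AE$, and tracing — is correct and usefully fills in the "obviously implies" at the end of the paper's argument, but it cannot rescue the proof without a correct Step~1 that actually uses the three-party hypotheses.
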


\begin{proof}
Since $I(R_1;R_2|R_3)=H(R_1|R_3)-H(R_1|R_2R_3)=0$, by the result of~\cite{SSA-eq}, we know that the reduced state $\rho_{R_1R_2}$ is separable.
Thus we can write
\[
  \rho_{R_1R_2}=\sum_{x=1}^Mp_x\sigma_x^{R_1}\otimes\omega_x^{R_2},
\]
and without loss of generality we assume $\sigma_{x_1}\neq\sigma_{x_2}$ and
$\omega_{x_1}\neq\omega_{x_2}$ for all $1\leq x_1\neq x_2\leq M$. Let
\[
  \rho_{XR_1R_2}=\sum_{x=1}^Mp_x\proj{x}^X\otimes\sigma_x^{R_1}\otimes\omega_x^{R_2}
\]
be an extension of $\rho_{R_1R_2}$, with $\{\ket{x}\}$ a set of orthogonal states. Then we
have
\begin{equation}
  \label{eq:lemma-eccq-2}
  0=H(R_1|R_2)\geq H(R_1|R_2X)=\sum_xp_xH(\sigma_x)\geq 0.
\end{equation}
On the one hand, Eq.~(\ref{eq:lemma-eccq-2}) implies that $\sigma_x$ is pure for all values of $x$.
On the other hand, from Eq.~(\ref{eq:lemma-eccq-2}) we have
\[
  H(R_1|R_2)= H(R_1|R_2X),
\]
which implies that we can recover $\rho_{XR_1R_2}$ from $\rho_{R_1R_2}$ by a CPTP map acting on system
$R_2$ only~\cite{Petz1,Petz2}. This further implies that the set of states $\{\omega_x\}_x$ are
mutually orthogonal. In similar ways, we can show that $\omega_x$ is pure for all values of $x$, and
the set of states $\{\sigma_x\}_x$ are mutually orthogonal. These consequences all together give us that
$\rho_{R_1R_2}$ has the follow form:
\[
  \rho_{R_1R_2}=\sum_{x=1}^Mp_x\proj{x}^{\otimes 2},
\]
with $\{\ket{x}\}$ a set of orthogonal states. This obviously implies Eq.~(\ref{eq:lemma-eccq-1}),
and we are done.
\end{proof}

\section{Informationally Degradable Channels Have Additive Coherent Information}
\label{sec:informational-degradability}
We say that a quantum channel $\cN:A\rightarrow B$ is informationally degradable, if
\[
  I(V;B)\geq I(V;E)
\]
for any state $\rho_{VBE}=(I\otimes U_{\cN})\phi_{VA}(I\otimes U_{\cN}^\dagger)$, where
$U_{\cN}:A\rightarrow BE$ is the unitary interaction associated with the channel. This
class of channels is a generalization of degradable channels~\cite{DS03},
and we will show that they enjoy the same property of additivity for coherent information.

\begin{proposition}
  \label{prop:additivity-info-degrad}
  Let quantum channels $\cN_1,\ldots, \cN_n$ be informationally degradable. Then they have
  additive coherent information:
  \begin{equation}
    \label{eq:prop-aid-1}
    I_c(\cN_1\otimes\ldots \otimes\cN_n)=\sum_{i=1}^n I_c(\cN_i).
  \end{equation}
  Especially, $Q(\cN)=I_c(\cN)$ for any informationally degradable channel $\cN$.
\end{proposition}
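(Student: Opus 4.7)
The plan is to prove the nontrivial direction $I_c(\cN_1\otimes\cdots\otimes\cN_n)\leq\sum_{i=1}^n I_c(\cN_i)$ by a single-letter telescoping estimate that invokes informational degradability exactly once per tensor factor. The reverse inequality is immediate from product inputs $\phi_{A_1}\otimes\cdots\otimes\phi_{A_n}$, which certify $I_c(\cN_1\otimes\cdots\otimes\cN_n)\geq\sum_i I_c(\cN_i,\phi_{A_i})$ and hence, upon separate optimization, $\geq\sum_i I_c(\cN_i)$. Crucially, I want to avoid arguing that the product channel itself is informationally degradable, which would be a substantially stronger statement.

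First I would fix an arbitrary input $\phi_{A_1\dots A_n}$ and form the joint output $\rho_{B_1E_1\dots B_nE_n}=(U_{\cN_1}\otimes\cdots\otimes U_{\cN_n})\phi(U_{\cN_1}\otimes\cdots\otimes U_{\cN_n})^\dagger$, so that $I_c(\cN_1\otimes\cdots\otimes\cN_n,\phi)=H(B_{1:n})-H(E_{1:n})$. The centerpiece of the argument is the telescoping identity
\begin{equation*}
H(B_{1:n})-H(E_{1:n})=\sum_{i=1}^n\Bigl[H(B_{1:i}E_{i+1:n})-H(B_{1:i-1}E_{i:n})\Bigr]=\sum_{i=1}^n\Bigl[H(B_i|W_i)-H(E_i|W_i)\Bigr],
\end{equation*}
with $W_i:=B_{1:i-1}E_{i+1:n}$. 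Each summand can be rewritten as $[H(B_i)-H(E_i)]-[I(B_i;W_i)-I(E_i;W_i)]$, so it suffices to show $I(B_i;W_i)\geq I(E_i;W_i)$ for every $i$.

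To establish this last inequality I would exhibit $W_i$ as a bona fide auxiliary system for $\cN_i$. Because the unitaries $U_{\cN_j}$ for $j\neq i$ act on disjoint tensor factors, I can apply them first; tracing out the systems $B_{i+1:n}E_{1:i-1}$ that do not appear in $W_i$ then yields a (possibly mixed) state $\sigma_{W_i A_i}$. Applying $U_{\cN_i}$ to this state produces exactly the reduced state of $\rho$ on $W_iB_iE_i$. Informational degradability of $\cN_i$, applied with $V=W_i$ to the input $\sigma_{W_iA_i}$, therefore gives $I(W_i;B_i)\geq I(W_i;E_i)$ in $\rho$. Since $H(B_i)$ and $H(E_i)$ in $\rho$ coincide with the corresponding output entropies of $\cN_i$ on $\tilde\phi_{A_i}:=\Tr_{A_{\neq i}}\phi$, substituting into the telescoping sum gives
\begin{equation*}
I_c(\cN_1\otimes\cdots\otimes\cN_n,\phi)\leq \sum_{i=1}^n\bigl[H(B_i)-H(E_i)\bigr]=\sum_{i=1}^n I_c(\cN_i,\tilde\phi_{A_i})\leq \sum_{i=1}^n I_c(\cN_i).
\end{equation*}
Maximizing over $\phi$ proves Eq.~(\ref{eq:prop-aid-1}). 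Finally, $Q(\cN)=I_c(\cN)$ for a single informationally degradable channel follows by applying the additivity to $\cN^{\otimes k}$ and invoking the regularization $Q(\cN)=\lim_{k\to\infty}\tfrac{1}{k}I_c(\cN^{\otimes k})$.

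The only genuinely delicate part I anticipate is the bookkeeping that certifies $W_i=B_{1:i-1}E_{i+1:n}$ as a legitimate auxiliary for $\cN_i$: one must make precise that reordering the disjoint unitaries and then restricting attention to $A_i$ together with $W_i$ produces an input state of the standard form $\sigma_{W_iA_i}$ to which the defining inequality of informational degradability applies. Once this is checked, the argument never has to touch the thorny question of whether a tensor product of informationally degradable channels is itself informationally degradable; all work is done one factor at a time.
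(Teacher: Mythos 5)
Your proof is correct and follows essentially the same approach as the paper: superadditivity from product inputs, and subadditivity via a telescoping decomposition of $H(B_{1:n})-H(E_{1:n})$ into $n$ terms, each bounded by $H(B_i)-H(E_i)$ using informational degradability once per factor with the surrounding output systems playing the role of the auxiliary $V$. The only differences are cosmetic (the paper's telescoping takes $W_i = E_{1:i-1}B_{i+1:n}$ rather than your $B_{1:i-1}E_{i+1:n}$, and you are somewhat more explicit about verifying that $W_i$ is a legitimate auxiliary system for $\cN_i$, a point the paper leaves implicit).
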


\begin{proof}
It suffices to show subadditivity. At first, we notice that due to the informational degradability,
for any $i$ it holds that
\[
  H(B_iV)-H(E_iV) \leq H(B_i)-H(E_i),
\]
where the entropies are evaluated on any state $\rho_{VB_iE_i}:=U_{\cN_i}\phi_{VA_i}U_{\cN_i}^\dagger$.
Using this, we can actually show uniform subadditivity,
\begin{equation*}\begin{split}
  I_c(\phi_{A_1\ldots A_n},\cN_1\otimes\ldots \otimes\cN_n)=&H(B_1\ldots B_n)-H(E_1\ldots E_n) \\
  =& H(B_1\ldots B_n)-H(E_1B_2\ldots B_n) \\
   &+H(E_1B_2\ldots B_n)-H(E_1E_2B_3\ldots B_n) \\
   &+\ldots                                     \\
   &+H(E_1\ldots E_{n-1}B_n)-H(E_1\ldots E_n)   \\
 \leq &\left(H(B_1)-H(E_1)\right)+\left(H(B_2)-H(E_2)\right)+\ldots + \left(H(B_n)-H(E_n)\right) \\
  =& \sum_{i=1}^n I_c(\phi_{A_i},\cN_i).
\end{split}\end{equation*}
\end{proof}
This implies Eq.~(\ref{eq:prop-aid-1}). The single-letter formula for quantum capacity follows
as a consequence directly.

\smallskip\noindent
\textbf{Remark}: We do not know whether product of informationally degradable channels is still informationally degradable.
This is why we prove additivity for multiple uses of channels, instead of proving this for any two channels.
A interesting problem we leave for future study is to find informationally degradable channels that are not degradable in the sense of Ref.~\cite{DS03}.

\section{Completely coherent information and quantum sum rate}

Given an isometry $U_{\cN}:A\rightarrow BE$, we say that the rate pair $(R_1,R_2)$ is an achievable joint quantum communication rate if for all $\epsilon>0$ there is an $n_0$ such that for all $n\geq n_0$ there are isometries $U_{\cE_n}: A_1A_2 \rightarrow A^nF$ and
decoders $U_{\cD_1}: B^n \rightarrow A_1^\prime D_1$ and $U_{\cD_2}: E^n \rightarrow A_2^\prime D_2$ with $\log |A_1| \geq nR_1 $ and $\log |A_2| \geq nR_2 $, and $A_1\cong A_1^\prime$ and $A_2\cong A_2^\prime$, such that
\begin{equation}\begin{split}\label{Eq:EncodeDecodeRateSum}
\rho_{V_1A_1 V_2A_2 D_1D_2F}:&=(U_{\cD_1}\otimes U_{\cD_2}) U_{\cN}^{\otimes n} U_{\cE}( \proj{\Phi_{V_1A_1}} \otimes \proj{\Phi_{V_2A_2}})U_{\cE}^\dg (U_{\cN}^{\otimes n})^\dg (U_{\cD_1}^\dg \otimes U_{\cD_2}^\dg) \\
 &\approx_\epsilon \proj{\Phi_{V_1A_1^\prime}} \otimes \proj{\Phi_{V_2A_2^\prime}}\otimes \proj{\sigma_{D_1D_2F}} = :\sigma_{V_1A_1^\prime V_2A_2^\prime D_1D_2F}
\end{split}\end{equation}
where
\begin{align}
\ket{\Phi_{V_1A_1}} & = \frac{1}{\sqrt{|A_1|}} \sum_{i = 1}^{|A_1|} \ket{i}_{V_1}\ket{i}_{A_1}\\
\ket{\Phi_{V_2A_2}} & = \frac{1}{\sqrt{|A_2|}} \sum_{i = 1}^{|A_2|} \ket{i}_{V_2}\ket{i}_{A_2},
\end{align}
$\rho \approx_\epsilon \sigma $ means $\| \rho -\sigma\|_1 \leq \epsilon$ and $ \ket{\sigma_{D_1D_2F}}$ is some pure state on $D_1D_2F$.

We define the \emph{joint quantum capacity} of an isometry $U_\cN: A \rightarrow BE$ to be
\begin{align}
Q_J(U_\cN) & = \max \left\{R_1+R_2| (R_1,R_2) \textrm{ is acheivable} \right\}.
\end{align}

Now, if $(R_1,R_2)$ is an achievable rate pair, for any $\epsilon$ we have encoders and decoders satisfying Eq.~(\ref{Eq:EncodeDecodeRateSum}). Thus, we have

\begin{align}
n(R_1 + R_2) + H(D_1)_\sigma-H(D_1F)_{\sigma} & \leq \log |A_1^\prime| + \log |A_2^\prime| + H(D_1)_\sigma-H(D_1F)_{\sigma} \nonumber\\
 & = H\left( A_1^\prime\right)_\sigma + H(V_2)_\sigma  + H(D_1)_\sigma-H(D_1F)_{\sigma}\nonumber\\
 & = H\left( A_1^\prime V_2 D_1\right)_\sigma-H(D_2)_{\sigma}\nonumber\\
 & = H\left( A_1^\prime V_2 D_1\right)_\sigma-H(A_2^\prime V_2 D_2)_{\sigma}\nonumber\\
& \approx H\left( A_1 V_2 D_1\right)_\rho-H(A_2 V_2 D_2)_{\rho}\nonumber\\
& = H(B^nV_2)_\mu-H(E^nV_2)_\mu \label{Eq:FirstState}
\end{align}
where
\begin{align}
\mu_{B^nE^nV_1V_2F} & = U_{\cN}^{\otimes n} U_{\cE}( \proj{\Phi_{V_1A_1}} \otimes \proj{\Phi_{V_2A_2}}) U_{\cE}^\dg (U_{\cN}^{\otimes n})^\dg.
\end{align}

We also have
\begin{align}
n(R_1 + R_2) + H(D_1F)_{\sigma}-H(D_1)_\sigma & \leq \log |A_1^\prime| + \log |A_2^\prime| + H(D_1F)_{\sigma}-H(D_1)_\sigma\nonumber\\
 & =  H\left( A_1^\prime\right)_\sigma + H(V_2)_\sigma  + H(D_1F)_{\sigma}-H(D_1)_\sigma\nonumber\\
& = H\left( A_1^\prime D_1V_2 F \right)_\sigma-H(D_2F)_{\sigma} \nonumber\\
& = H\left( A_1^\prime D_1V_2 F \right)_\sigma-H(A_2^\prime V_2 D_2F)_{\sigma}\nonumber\\
& \approx H(B^nV_2F)_\mu-H(E^nV_2F)_\mu.\label{Eq:SecondState}
\end{align}

Now, let
\begin{align}
\tilde{\mu}_{B^nE^nV_2FT} = \frac{1}{2}\mu_{B^nE^nV_2F}\otimes \proj{0}_T + \frac{1}{2}\mu_{B^nE^nV_2}\otimes \proj{0}_F \otimes \proj{1}_T.
\end{align}
This is a state that can be made with $n$ copies of $U_\cN$.  Taking the average of Eq.~(\ref{Eq:FirstState}) and Eq.~(\ref{Eq:SecondState}), and letting $W = V_2FT$ we find
\begin{align}
n(R_1 + R_2) & \lesssim \frac{1}{2} \left(  H(B^nV_2)_\mu-H(E^nV_2)_\mu +H(B^nV_2F)_\mu-H(E^nV_2F)_\mu \right)\\
& = H(B^nW)_{\tilde{\mu}} - H(E^nW)_{\tilde{\mu}}\\
& \leq I^{cc}(U_\cN^{\otimes n}) = n I^{cc}(U_\cN).
\end{align}
This implies $Q_J(U_{\cN}) \leq I^{cc}(U_\cN)$.

\bibliographystyle{apsrev}
\bibliography{shortq}

\end{document}